\newcommand*\rot{\rotatebox{90}}
\newcommand{\cmark}{\ding{51}}%
\newcommand{\xmark}{\ding{55}}%
\def\E{\mathbb{E}}
\def\V{\mathrm{Var}}
\def\cov{\mathrm{Cov}}
\def\Pa{\mathrm{Pa}}
\def\dsep{\indep_{d}\;}
\def\indepP{\indep_{P\;}}
\def\normal{\mathrm{Normal}}
\def\bern{\mathrm{Ber}}
\def\unif{\mathrm{Unif}}
\def\sigmoid{\mathrm{Sigm}}
\newcommand\indep{\protect\mathpalette{\protect\independenT}{\perp}}
\def\independenT#1#2{\mathrel{\rlap{$#1#2$}\mkern2mu{#1#2}}}
\def\bfT{\mathbf{T}}
\def\bfU{\mathbf{U}}
\def\bfY{\mathbf{Y}}
\def\bfX{\mathbf{X}}
\def\mT{\Theta_T}
\def\mU{\Theta_U}
\def\mY{\Theta_Y}
\def\mX{\Theta_X}
\def\Pa{\mathrm{Pa}}
\def\cG{\mathcal{G}} 
\def\hcG{\mathcal{G}^*} 
\newtheorem{thmdef}{Definition}
\newtheorem{thmthm}{Theorem}
\newtheorem{thmasmp}{Assumption}
\newtheorem{thmex}{Example}
\newtheorem{thmrem}{Remark}
\newenvironment{manualtheorem}[1]{%
  \manualtheoreminner
}{\endmanualtheoreminner}
\tikzset{
    -Latex,auto,node distance =1 cm and 1 cm,semithick,
    state/.style ={circle, draw, minimum width = 0.8 cm},
    latent/.style ={circle, draw, minimum width = 0.8 cm, fill=gray, opacity=0.2, text opacity=1},
    point/.style = {circle, draw, inner sep=0.04cm,fill,node contents={}},
    bidirected/.style={Latex-Latex,dashed},
    el/.style = {inner sep=2pt, align=left, sloped}
}
\tikzstyle{obs} = [circle,fill=white,draw=black,inner sep=1pt,minimum size=20pt,font=\fontsize{10}{10}\selectfont,node distance=1,thick]
\tikzstyle{plt} = [obs,dotted]
\definecolor{color1}{HTML}{377EB8}
\definecolor{color2}{HTML}{FF7F00}
\definecolor{color3}{HTML}{4DAF4A}
\title{Detecting Hidden Confounding In Observational Data Using Multiple Environments}
\author{%
  Rickard K.A. Karlsson
  \\
  Department of Intelligent Systems\\
  Delft University of Technology\\
  The Netherlands \\ \texttt{r.k.a.karlsson@tudelft.nl}
  \And 
  Jesse H. Krijthe\\
  Department of Intelligent Systems\\
  Delft University of Technology\\
  The Netherlands \\
  \texttt{j.h.krijthe@tudelft.nl}
}
\begin{document}

\maketitle

\begin{abstract}
  A common assumption in causal inference from observational data is that there is no hidden confounding. Yet it is, in general, impossible to verify this assumption from a single dataset. Under the assumption of independent causal mechanisms underlying the data-generating process, we demonstrate a way to detect unobserved confounders when having multiple observational datasets coming from different environments.
  We present a theory for testable conditional independencies that are only absent when there is hidden confounding and examine cases where we violate its assumptions: degenerate \& dependent mechanisms, and faithfulness violations. Additionally, we propose a procedure to test these independencies and study its empirical finite-sample behavior using simulation studies and semi-synthetic data based on a real-world dataset. In most cases, the proposed procedure correctly predicts the presence of hidden confounding, particularly when the confounding bias is large.
\end{abstract}

\section{Introduction}

Estimating the causal effect of a treatment on an outcome is a fundamental challenge in many areas of science and society. While this is straightforwardly done using data from randomized studies, using observational data for this task is appealing since they are often more feasible to collect while also being more representative of the population of interest~\citep{pearl2009}. To identify causal effects using such data it is often assumed there is no hidden confounding. When this \textit{untestable} assumption is violated we run the risk of confusing causal relationships with spurious correlations.  This can have serious consequences such as unknowingly suggesting a non-effective or, even worse, potentially harmful treatment. Therefore, detecting the presence of hidden confounding is an important~problem. 

Data collected from different sources often tend to be heterogeneous due to e.g. changing circumstances or time shifts. In this work, we show how such heterogeneity can be exploited to make hidden confounding testable \textit{solely} from observational data. We consider a setting where observational data has been collected from different environments $E$. In each environment, we observe the same treatment $T$ and outcome $Y$, as well as covariates $X$ that are known confounders between $T$ and $Y$. Further, we assume that the data is heterogeneous across these environments under the principle of \textit{Independent Causal Mechanisms}~\citep{peters2017elements,scholkopf2021causalrepr}, which states that a causal system consists of autonomous modules that do not inform or influence each other. The question we ask is whether there exists further hidden confounding between $T$ and $Y$ after having adjusted for $X$. If that is the case, the causal effect of $T$ on $Y$ is not identifiable in general. Perhaps surprisingly,  we demonstrate a way to decide if the causal effect would be identifiable by deriving testable implications for whether hidden confounding is present or not. We achieve this by exploiting the hierarchical structure of the problem, shown in Figure~\ref{fig:problem_setting}.

As an illustration of a setting where this might be applied, there are many existing multi-level studies in which individuals are nested in clusters and non-randomly assigned to a treatment/control on an individual level. For instance, we can have pre-defined clusters in a multi-level observational study that investigate a specific treatment and outcome from multiple hospitals~\citep{goldstein2002multilevel} or schools~\citep{leite2015evaluation} that care for patients/pupils from different demographics. Here the clusters constitute different environments. Often we might suspect the existence of potential individual-level confounders such as socio-economic status. Now, if these confounding factors have different distributions at each cluster our work proposes a way to statistically test the presence of confounding between the treatment and outcome -- even when we do not observe the confounding factors directly.

Another example where our method is suitable is when there are multiple observational studies where no randomized control trials are available, a problem area where systematic procedures are still lacking~\citep{mueller2018methods}. In particular, individual participant data meta-analyses are a type of analysis that uses all individual-level data from multiple studies instead of aggregating summary statistics~\citep{riley2010meta, di2016body}. With this information and given that we observe the same treatment and outcome across all studies,  our proposed algorithm can also be used to detect if there is common hidden confounding among the studies.

\begin{figure}[t]
    \centering
    \begin{subfigure}[b]{0.5\textwidth}
        \centering
        \includegraphics[width=0.92\textwidth]{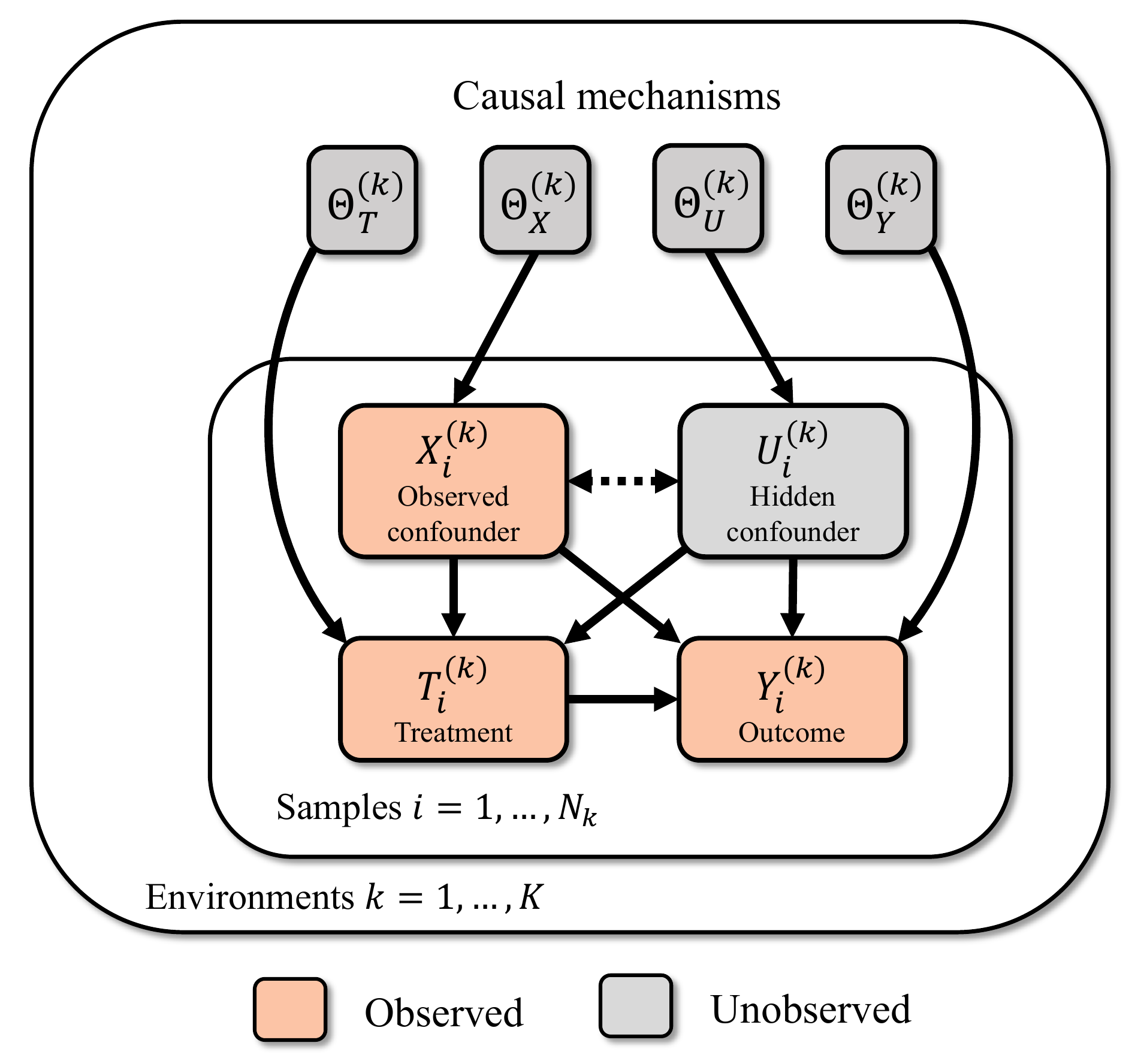}
        \caption{}
        \label{fig:hierarchical_model}
    \end{subfigure}
    \begin{subfigure}[b]{0.29\textwidth}
        \centering
        \includegraphics[width=0.86\textwidth]{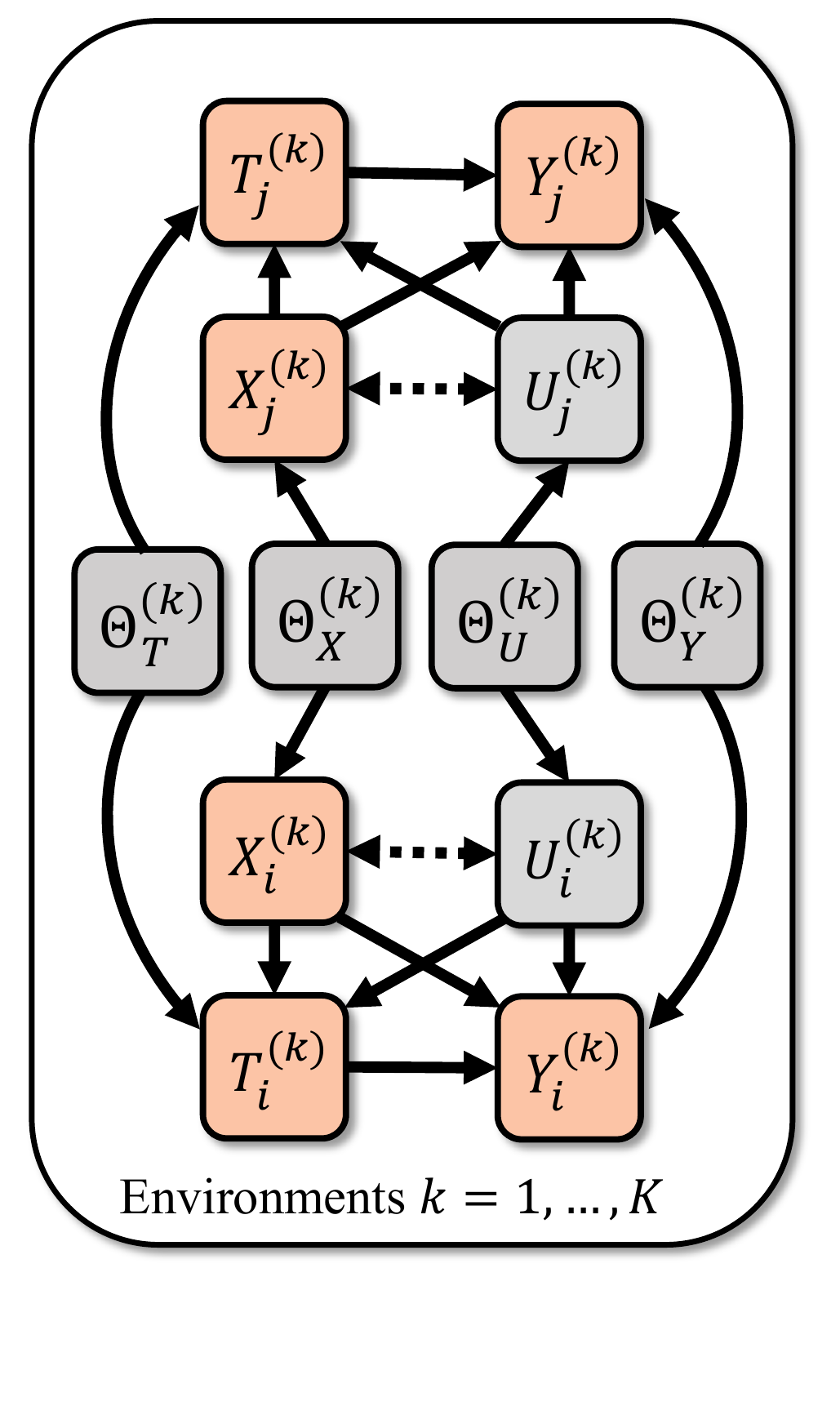}
        \caption{}
        \label{fig:unrolling}
    \end{subfigure}
    \caption{We have multiple observations $i=1,\dots,N_k$ of treatment $T_i^{(k)}$, outcome $Y_i^{(k)}$ and confounder $X_i^{(k)}$ in different environments $E^{(k)}$. The dashed bi-directed edge between $X_i^{(k)}$ and $U_i^{(k)}$ allows for any causal relationship, or lack thereof, between the observed and hidden confounder. \textbf{(a)}:~The hierarchical structure of the multi-environment data; the causal mechanisms are unobserved but we know the indicator $E^{(k)}$ for what environment observations belong to. \textbf{(b)}: By unrolling the graph in (a) we see that dependencies exist between any pairs of observations $(i,j)$ from the same environment (when not conditioning on the mechanisms). This can be exploited to detect the presence of the hidden confounder.}
    \label{fig:problem_setting}
\end{figure}

\paragraph{Contributions} We prove that there exists, under the principle of independent causal mechanisms, testable independencies that are only violated in the presence of unobserved confounding between treatment and outcome (Sec.~\ref{sec:theory}, Theorem~\ref{thm:testable_implication}). Further, we explore the effect of changes and violations of our assumptions -- while most assumptions are necessary, we find that some can be relaxed (Sec.~\ref{sec:importance_assumptions}). 
We then introduce a statistical testing procedure that uses any suitable conditional independence test to detect the presence of hidden confounding in observational datasets from multiple environments~(Sec.~\ref{sec:statistical_testing}). 
Lastly, we perform an empirical finite-sample analysis of it using both synthetic and semi-synthetic data generated with real-world covariates from the Twins dataset~\citep{almond2005costs,louizos2017causal}. We observe that our proposed procedure correctly predicts the presence of hidden confounding in most cases, particularly when the confounding bias is large~(Sec.~\ref{sec:experiments}).

\section{Problem setting}

We start with some preliminaries of the causal terminology used in this paper.

\begin{thmdef}[Causal Graphical Model (CGM)]
A causal graphical model $M=(\cG,P)$ over $d$ random variables $\mathbf{V}=(V_1, V_2, \dots, V_d)$ comprises (i) a \textit{directed acyclic graph} (DAG) $\cG$ with vertices $\mathbf{V}$ and edges $V_i\rightarrow V_j$ iff $V_i$ is a direct cause of $V_j$, and (ii) a joint distribution $P$ such that it has the following \textit{Markov} or \textit{causal factorization} over $\cG$:
\begin{equation}
    \label{eq:causal_factorization}
    P (V_1, V_2, \dots, V_d) = \prod_{i=1}^d P (V_i \mid \Pa(V_i))
\end{equation}
where $\Pa(V_i)$ denotes the parents (direct causes) of $V_i$ in $\cG$ and $P(V_i \mid \Pa(V_i))$ is the causal mechanism of $V_i$.
\end{thmdef}
The DAG $\cG$ encodes various conditional independencies between the variables -- also known as d-separations in the DAG, see~\citet[Chapter 3.3]{pearl1988probabilistic} -- which we write as $\mathbf{A}\dsep \mathbf{B}\mid \mathbf{C}$ over some disjoint sets of variables $\mathbf{A},\mathbf{B}$ and $\mathbf{C}$. We shall assume that conditional independencies in $\cG$ imply the same conditional independencies in $P$, and vice versa:
\begin{wrapfigure}[13]{r}{0.18\textwidth}
    \includegraphics[width=0.18\textwidth]{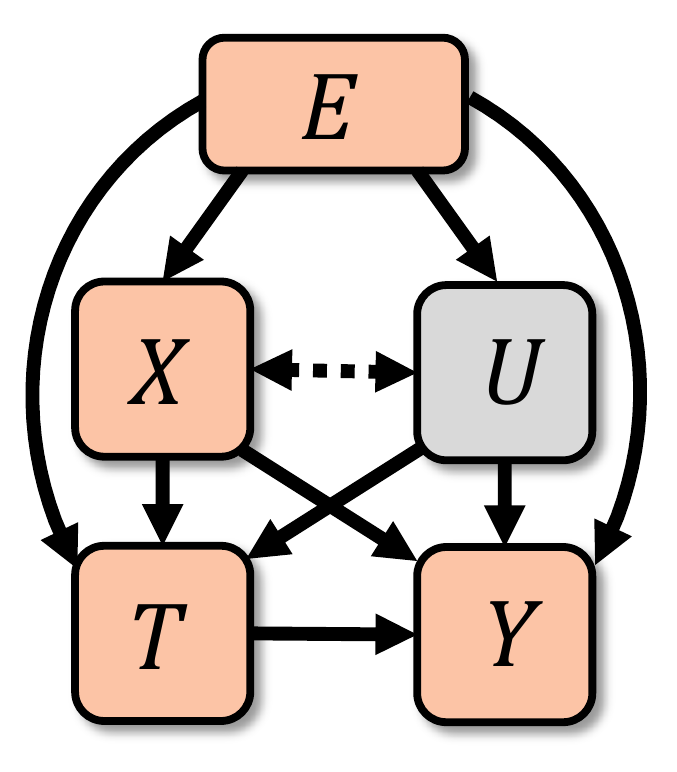}
    \caption{The setting where we want to detect the presence of a hidden confounder $U$ in $\cG$.}
    \label{fig:normal_graph}
\end{wrapfigure}
\begin{thmasmp}[Faithfulness \& Causal Markov Property]
\label{asmp:faith_markov}
For $P$ and $\cG$ we have (i) the faithfulness property that $\mathbf{A}\indepP\mathbf{B}\mid \mathbf{C} \Rightarrow \mathbf{A}\dsep\mathbf{B}\mid \mathbf{C}$, and (ii) the causal Markov property that $\mathbf{A}\indepP\mathbf{B}\mid \mathbf{C} \Leftarrow \mathbf{A}\dsep\mathbf{B}\mid \mathbf{C}$. 
\end{thmasmp}

We consider a setting with the following variables in our causal graphical model: a one-dimensional treatment $T\in\mathcal{T}$ and outcome $Y\in\mathcal{Y}$, in addition to some observed covariates $X\in\mathcal{X}$ and unobserved covariates $U\in\mathcal{U}$. We do not restrict the dimensionality of $X$ and $U$. Additionally, in this setting, the environment $E$ has a direct effect on all other variables, making it a root node in $\cG$. 
We say that a variable is a confounder between $T$ and $Y$ if it is a cause of both $T$ and $Y$ in $\cG$. We assume that $X$ is a known confounder between $T$ and $Y$, while the relationship between $U$ and the other variables is unknown. Hence, $U$ could be an unobserved hidden confounder (as illustrated in Figure~\ref{fig:normal_graph}) or, for instance, completely unrelated to the other variables.

Expressed in the framework of~\citet{pearl2009}, the goal of causal inference is to estimate the probability $P(Y \mid do(T=t))$ where $do(T=t)$ represents an intervention on the treatment. Without any further assumptions, $P(Y \mid do(T=t))$ is not identifiable from an observational dataset; that is data where we have observed the choice of treatment without influencing it~\citep{pearl2009}. In particular, in the setting we consider here, the interventional effect remains unidentifiable if the unobserved $U$ is a confounder between $T$ and $Y$.~\footnote{There exist other procedures that could circumvent this issue, but these alternatives seldom avoid the unconfoundedness assumption completely. As an example, instrumental variable estimation is applicable when there is unobserved confounding between $T$ and $Y$, but only when the relationship between the instrumental variable and $T$ is unconfounded~\citep{angrist1996identification}.}.  
Unfortunately, there is no way to check whether such unobserved confounders are present in a single dataset. We will show, however, that things are different when we have access to observational datasets from multiple environments. In this setting, we present a way to detect confounding even if it is not observed, hence demonstrating a novel and valuable approach for verifying an essential prerequisite for causal inference from observational data. In the rest of this section, we present the main assumptions that enable us to do this.

First, we have data from multiple environments $E\in\mathcal{E}$ with different joint distributions $P(T, Y, X, U \mid E)$. We shall use $P_E(\cdot)$ to denote $P(\cdot \mid E)$, and use small letters for the random variables whenever they take particular values. In our setting, we have datasets $D_{k}=\{t_i^{(k)}, y_i^{(k)}, x_i^{(k)}\}_{i=1}^{N_{k}}$ from multiple environments $e^{(1)}, e^{(2)}, \dots, e^{(K)}$; each has $N_{k}$ observations which are assumed to be i.i.d. within the environment. $N_k$ is fixed but can be different for each environment. The environments are related to each other through the following assumption. 
\begin{thmasmp}[Shared Causal Graph]
\label{asmp:shared_graph} 
All environments share the same underlying causal DAG $\cG$.
\end{thmasmp}
Next, we specify how changes in $P_E(T,Y,X,U)$ arise between the different environments. We shall assume that the conditional probabilities in~\eqref{eq:causal_factorization} -- which we refer to as causal mechanisms -- vary independently per environment. This is known as the independent causal mechanism principle.
\begin{thmasmp}[Independent Causal Mechanism (ICM) Principle~\citep{peters2017elements}] 
\label{asmp:ICM}
The causal generative process of a system’s variables is composed of autonomous modules that do not inform or influence each other. In the probabilistic case, this means that the conditional distribution of each variable given its causes (i.e., its parents in the causal graph) does not inform or influence the other mechanisms.
\end{thmasmp}
The above assumption covers two aspects: one concerning \textit{informing} and the other about \textit{influencing}. That the mechanisms do not \textit{inform} each other can be interpreted as that knowing the conditional probability of one variable does not tell us anything about the conditional probabilities of other variables~\citep{janzing2008causal,guo2022finetti}. Further, we assume that changing (or performing an intervention upon) one mechanism has no \textit{influence} on other mechanisms~\citep{schoelkopf2012anticausal}. While this notion of independence between mechanisms can be described through a non-stochastic, algorithmic mutual information~\citep{janzing2008causal}, we focus in this work explicitly on statistically independent mechanisms.

To model changes between environments with independent causal mechanisms, we parameterize each causal mechanism with $\Theta_V \in \mathcal{O}_V$ for $V\in\{T,Y,X,U\}$.  In each environment, these parameters are fixed and determine the distribution
\begin{equation}
    \label{eq:causal_mechanisms}
    P_E(T,Y,X,U) = \prod_{V\in \{T,Y,X,U\}} P_{\Theta_V}(V\mid\Pa(V))~.
\end{equation}
Note that while we need to know which observations come from which environment, we do not assume to know the particular values of the individual parameters $(\mT, \mY, \mX, \mU)$ in any environment. We shall assume that environments are randomly sampled from a \textit{distribution over mechanisms} by defining non-degenerate probability measures for each causal mechanism.
\begin{thmasmp}[Non-degenerate Probabilistic Independent Causal Mechanisms]
\label{asmp:hetero_ICM}
The independent causal mechanisms are non-degenerate random variables with probability measures $P(\Theta_V)$ for all $V\in\{T,Y,X,U\}$ such that $\mT$, $\mY$, $\mX$ and $\mU$ are pairwise independent random~variables.
\end{thmasmp}
With the above assumption, when we now say \textit{independent} causal mechanisms, we refer to statistical independence between them. We argue that the above assumption is not particularly strong if we already have Assumption~\ref{asmp:ICM}; the mechanisms are now allowed to change across environments in a probabilistic manner. \citet{guo2022finetti} proved the existence of such probability measures for the causal mechanisms when the data comprises an infinitely exchangeable sequence of random variables, drawing parallels to de Finetti's theorem~\citep{deFinetti1937}. 

As a final note on the assumptions we have made: these assumptions should not be taken for granted and it is crucial to also understand how violations of them will influence our theory. For this reason, we will cover this topic in Section~\ref{sec:importance_assumptions}.

\paragraph{Hierarchical model of the environments}
Using these assumptions, we can now express the distribution of the datasets $\{D_{k}\}_{k=1}^K$ as a hierarchical model~\citep[Chapter~5]{gelman2013bayesian}, wherein we first sample the mechanisms i.i.d. $\Theta_V^{(k)}\sim P(\Theta_V )$ for $k=1,\dots,K$ and $V\in\{T,Y,X,U\}$ and then, for each environment $k$, obtain $(T_i^{(k)}, Y_i^{(k)}, X_i^{(k)}, U_i^{(k)})$ by repeatedly sampling $N_{k}$ times according to~\eqref{eq:causal_mechanisms}. Using plate notation, we can compactly represent this hierarchical model in the augmented DAG $\hcG$ shown in Figure~\ref{fig:hierarchical_model}. The edges in $\hcG$ between $(T_{i}^{(k)},Y_{i}^{(k)},X_{i}^{(k)},U_{i}^{(k)})$ are the same as those between $(T,Y,X,U)$ in $\cG$ and $\Theta_V^{(k)}\in\Pa(V_i^{(k)})$ where $V\in\{T,Y,X,U\}$, for all $i$ and $k$.

In the next parts of the paper, we will prove how the structure of $\hcG$ implies novel observable constraints in the multi-environment data distribution that can be exploited to statistically test the presence of hidden confounding between $T$ and $Y$ after having adjusted for $X$. But first, we discuss the main literature related to our work.

\section{Related work}

This paper contributes to the growing body of research based on the principle of \textit{Independent Causal Mechanisms}~\citep{peters2017elements} which has inspired further research on integrating machine learning and causality~\citep{schoelkopf2012anticausal,peters2016causal,kugelgen20a,scholkopf2021causalrepr}. Multiple works have demonstrated how the independent causal mechanism principle could improve causal structure learning when data comes from heterogeneous environments that share the same causal model~\citep{zhang2017,ghassami2018multi,guo2022finetti}. In particular, \citet{guo2022finetti} demonstrated how independent causal mechanisms imply independence constraints similar to ours when the data is exchangeable -- but they assume there exist no unobserved latent variables in contrast to our work where we detect the presence of such variables.

Detecting hidden confounding is hard, and often we can only reason about the plausibility of having unmeasured confounders using some sort of sensitivity analysis~\citep{rosenbaum1983sensitivity,vanderweele2017sensitivity,cinelli2019sensitivity}. Other approaches check whether a treatment effect estimate is robust to changes in our assumptions by varying the adjustment set~\citep{lu2014robustness,oster2019unobservable,su2022robustness}. However, the guarantees are elusive for whether this type of robustness implies unconfoundedness. Similarly, one could test for heterogeneity of the treatment effect estimates from multiple environments and conclude that if they are different, then it is due to unobserved confounding; this idea bears resemblance to the pseudo-treatment approach discussed by~\citet{imbens2015causal} for assessing unconfoundedness. But testing heterogeneity to detect confounding only works if the treatment effect is assumed to be fixed across all environments, which excludes many real-world settings. Lastly, \citet{janzing2018detect} proposed a method to detect hidden confounding which is restricted to settings with linear models. 

In the setting with data from multiple environments, various approaches have been proposed to deal with hidden confounding, typically by combining both experimental and observational data
~\citep{bareinboim2016causal,kallus2018removing,athey2020combining,hatt2022combining,ilse2022combining,imbens2022longterm}. 
In contrast, we consider a setting combining \textit{only} observational data from multiple environments. Some works make parametric assumptions in this case, such as \citet{huang2020causal}, assuming linearity with non-Gaussian noise. Since we want to avoid strong parametric assumptions, we consider approaches that avoid these assumptions. The principled  \textit{Joint Causal Inference} (JCI) framework~\citep{mooij2020joint} is one such approach. It demonstrates how to apply traditional constraint-based methods for causal discovery~\citep{glymour2019causaldiscovery} with multi-environment data.
In the simpler setting with observed variables $(T, Y, E)$ excluding $X$, the JCI framework informs us that $Y \indepP E \mid T$ is violated in the presence of a hidden confounder $U$ if $E$ is an instrumental variable. 
But this means, once again, that the treatment effect is fixed across environments as we assume $E$ has no direct effect on $Y$. Variants of this type of test have also been mentioned by others, for instance~\citet[Lemma 3]{athey2020combining} and \citet{dahabreh2020benchmarking}.  We demonstrate the limitations of using this approach in our experiments, and provide a more in-depth explanation using graph-based arguments in Appendix~\ref{app:jci}.
Our contribution is a more general non-parametric test that works even if $E$ is an invalid instrument that can influence any of the other variables.

\section{Detecting hidden confounding in multi-environment data}
\label{sec:theory}

Our goal is to detect the presence of hidden confounding between treatment $T$ and outcome $Y$ after having adjusted for some observed confounders $X$. Graphically, this corresponds to detecting the existence of both edges $U \rightarrow T$ and $U \rightarrow Y$ in the causal DAG $\cG$. In this section, we demonstrate testable conditional independencies between the observed variables that are \textit{only} violated when both those edges exist -- hence providing testable implications for hidden~confounding. 

While we do not assume to know the complete causal DAG $\cG$ between our variables, we put two restrictions on it: (i) that $Y$ is not an ancestor of $T$ and (ii) that $X$ is a confounder to both $T$ and $Y$ in contrast to, for instance, being a mediator or only a cause to either one of them. These restrictions are relatively weak as (i) holds in all practical causal inference settings as a treatment $T$ happens before outcome $Y$ in time and (ii) can sometimes be verified by checking that both $T$ and $Y$ depend on $X$. Under this setting, we prove the following.

\begin{thmthm}
\label{thm:testable_implication}
Let $\bfT^{(k)}=(T^{(k)}_1,\dots,T^{(k)}_{N_k})$ be the vector of all observed treatments in environments $E^{(k)}$; define $\bfY^{(k)}$, $\bfX^{(k)}$, and $\bfU^{(k)}$ similarly. We consider the data distribution $P(\bfT^{(k)}, \bfY^{(k)}, \bfX^{(k)}, \bfU^{(k)})$ with $N_k\geq 2$ under assumption~\ref{asmp:faith_markov},\ref{asmp:shared_graph}, \ref{asmp:ICM} and \ref{asmp:hetero_ICM}. Furthermore, assume an underlying causal DAG $\cG$ where $Y$ is not an ancestor of $T$, and that $X$ is a known common cause to $T$ and $Y$. Then, for any $k=1,\dots,K$, there exists hidden confounding between $T$ and $Y$ in $\cG$ if and only if
\begin{equation}
    \label{eq:detect_hidden_conf_cond_indep}
    T_j^{(k)} \not\indepP Y_i^{(k)} \mid T_i^{(k)}, X_i^{(k)}, X_j^{(k)} \;\; \forall i,j=1,\dots,N_k : i\neq j~.
\end{equation}
\end{thmthm}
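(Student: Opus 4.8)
My plan is to reduce the statement to a purely graphical d-separation question in the augmented DAG $\hcG$ and then analyze it combinatorially. By Assumption~\ref{asmp:faith_markov} (both directions), $T_j^{(k)}\indepP Y_i^{(k)}\mid T_i^{(k)},X_i^{(k)},X_j^{(k)}$ holds in $P$ if and only if $T_j^{(k)}\dsep Y_i^{(k)}\mid T_i^{(k)},X_i^{(k)},X_j^{(k)}$ holds in $\hcG$, so it suffices to show that this d-separation holds exactly when at most one of the edges $U\to T$, $U\to Y$ is present. The structural fact I will lean on is Assumption~\ref{asmp:hetero_ICM}: since the mechanisms $\mT,\mY,\mX,\mU$ are pairwise independent root nodes, each $\Theta_V^{(k)}$ is non-adjacent to the others and is the \emph{only} link between the observation-$i$ copy and the observation-$j$ copy of its variable. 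Consequently every path in $\hcG$ between the $i$-block and the $j$-block must pass through some mechanism node $\Theta_V^{(k)}$, and it does so as a fork $V_i^{(k)}\leftarrow\Theta_V^{(k)}\to V_j^{(k)}$ that is always active because no mechanism lies in the conditioning set. By the plate (exchangeability) symmetry of $\hcG$, the d-separation status is identical for every ordered pair $i\neq j$ and every $k$, so it is enough to fix one such pair.

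For the forward direction (confounding implies dependence) I will simply exhibit an active path. If both $U\to T$ and $U\to Y$ are present, then $T_j^{(k)}\leftarrow U_j^{(k)}\leftarrow\mU^{(k)}\to U_i^{(k)}\to Y_i^{(k)}$ is a path on which every internal node ($U_j^{(k)}$, $\mU^{(k)}$, $U_i^{(k)}$) is a non-collider lying outside $\{T_i^{(k)},X_i^{(k)},X_j^{(k)}\}$; hence it is d-connecting, and faithfulness gives the claimed dependence.

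For the harder backward direction I will show that, whenever at most one confounding edge is present, every path from $T_j^{(k)}$ to $Y_i^{(k)}$ is blocked, organizing the argument around the final edge incident to $Y_i^{(k)}$. Because restriction (i) together with $X\to Y$ and acyclicity keep $Y$ from being an ancestor of $T$ or $X$ (and make any downstream $U$ a childless dead-end), a path can enter $Y_i^{(k)}$ only through $X_i^{(k)}$, through $T_i^{(k)}$ (if $T\to Y$), through $\mY^{(k)}$, or through $U_i^{(k)}$ (if $U\to Y$). The first two are immediately blocked since $X_i^{(k)}$ and $T_i^{(k)}$ are conditioned-on non-colliders there; the $\mY^{(k)}$ route is blocked because it forces the path through the collider $Y_j^{(k)}$, which is neither in the conditioning set nor has a descendant in it. This settles the case where $U\to Y$ is absent outright. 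In the remaining case $U\to Y$ is present, so by the no-confounding hypothesis $U\to T$ is absent; then entry into $Y_i^{(k)}$ can only be via $U_i^{(k)}$, and I must show that no active path reaches $U_i^{(k)}$ from $T_j^{(k)}$.

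The main obstacle is exactly this last sub-problem: ruling out a ``sneaky'' trail from $T_j^{(k)}$ into the $U$-part of the graph when $U\to T$ is unavailable. Here I will use that, without the $U\to T$ edge, $T_j^{(k)}$ is not adjacent to $U_j^{(k)}$, so any such trail must leave the $T$/$X$ side of the $j$-block either through the collider $Y_j^{(k)}$ (inactive, as above) or through an $X$-node; but $X_i^{(k)}$ and $X_j^{(k)}$ are in the conditioning set, so they can transmit only as activated colliders of the form $U^{(k)}\to X^{(k)}\leftarrow\mX^{(k)}$, which already presupposes reaching the very $U$-node we are trying to connect to. Chasing this dependence shows that closing the trail back to $T_j^{(k)}$ ultimately requires the edge $U\to T$, contradicting the hypothesis; acyclicity (in particular $X\to Y$ forbidding any $U$--$X$ cycle, and restriction (i) forbidding $Y\to U\to T$) is what eliminates the alternative escape routes. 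I expect the bookkeeping of these interlocking $U$--$X$ orientations to be the most delicate part, with the conditioning on \emph{both} $X_i^{(k)}$ and $X_j^{(k)}$ being the feature that makes every non-$\mU$ bridge collapse.
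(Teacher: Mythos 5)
Your overall reduction to a d-separation question in $\hcG$ and your forward direction are sound: the path $T_j^{(k)}\leftarrow U_j^{(k)}\leftarrow\mU^{(k)}\to U_i^{(k)}\to Y_i^{(k)}$ is indeed d-connecting given $\{T_i^{(k)},X_i^{(k)},X_j^{(k)}\}$, and the observation that every $i$-to-$j$ crossing must pass through a mechanism fork is the right structural lever. Note, though, that the paper does not argue this way at all: it enumerates all 40 admissible DAGs over $(T,Y,X,U)$ (given $X\to T$, $X\to Y$, and $Y$ not an ancestor of $T$) and machine-checks the d-separation in each augmented graph with dagitty (Table~\ref{tab:proof_1}), then invokes faithfulness and the Markov property exactly as you do. So you are attempting a manual, structural replacement for a computer-assisted exhaustive check.

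The gap is in your backward direction, and it sits exactly where you flag the "delicate bookkeeping." First, your key supporting claim --- "without the $U\to T$ edge, $T_j^{(k)}$ is not adjacent to $U_j^{(k)}$" --- is false: the edge may be oriented $T\to U$, which is admissible alongside $U\to Y$ (e.g.\ $X\to T$, $X\to Y$, $T\to U$, $U\to Y$ is acyclic, keeps $Y$ a non-ancestor of $T$, and has no confounding). In that DAG the routes $T_j^{(k)}\to U_j^{(k)}\leftarrow\mU^{(k)}\to U_i^{(k)}\to Y_i^{(k)}$ and $T_j^{(k)}\to U_j^{(k)}\to Y_j^{(k)}$ must be separately shown blocked (they are, as colliders at $U_j^{(k)}$ resp.\ $Y_j^{(k)}$ with no conditioned descendants), but your stated enumeration of exits from the $j$-block --- "through the collider $Y_j^{(k)}$ or through an $X$-node" --- omits them, and also omits the exit $T_j^{(k)}\leftarrow\mT^{(k)}\to T_l^{(k)}$ through the treatment mechanism. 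Second, the "chasing" step is asserted rather than performed: paths may detour through third observation blocks $l\neq i,j$ via mechanism forks ($\mU^{(k)}$, $\mX^{(k)}$, $\mT^{(k)}$), and each such excursion must be shown to terminate in an unactivated collider or a conditioned non-collider; the activated-collider chains $U_j^{(k)}\to X_j^{(k)}\leftarrow\mX^{(k)}\to X_i^{(k)}\leftarrow U_i^{(k)}$ that you correctly identify as the danger require checking every way of reaching $U_j^{(k)}$ from $T_j^{(k)}$, not just the one ruled out by the $T\to U\to X\to T$ cycle. The conclusion is true --- the paper's exhaustive table confirms that the d-separation holds in all 37 non-confounded DAGs --- but as written your argument does not yet establish it; either complete the case analysis over all admissible orientations of the $U$--$T$, $U$--$Y$, $U$--$X$, and $T$--$Y$ edges, or fall back on the enumeration.
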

\begin{proof}[Proof sketch]
To prove the statement, we look at d-separations in the extended causal graphical model $\hcG$ and show that~\eqref{eq:detect_hidden_conf_cond_indep} only is true for corresponding graphs $\cG$ where the unobserved $U$ is a confounder between $T$ and $Y$. Figure~\ref{fig:unrolling} illustrates how open paths may exist between pairs of observations $(i,j)$ going through $\mT^{(k)}, \mY^{(k)}, \mX^{(k)}$ or $\mU^{(k)}$ by unrolling the augmented graph $\hcG$. These paths are open because of Assumption~\ref{asmp:hetero_ICM}. This technique resembles the twin network method used for counterfactual inference~\citep{balke1994counterfactual} but the results we obtain from using this approach are distinctly different. The complete proof can be found in the Appendix.
\end{proof}

The variables $T_j^{(k)}$ and $Y_i^{(k)}$ are the treatment and outcome of two different observations in the same environment. Intuitively, the theorem states that after having adjusted for $(T_i^{(k)}, X_i^{(k)}, X_j^{(k)})$, we would expect under the ICM principle that $T_j^{(k)}$ to not provide any information about how $Y_i^{(k)}$ behaves. Thus, if it still does, then this can only be due to unobserved confounding. Testing this independence hence provides us with a testable implication in our observed data distribution on whether the unobserved $U$ is a confounder or not.

\paragraph{Two-variable case without observed confounders}  We can drop the observed confounder $X$ in Theorem~\ref{thm:testable_implication} and, interestingly, in that case, obtain even stronger results for detecting the presence of a hidden confounder. This setting is interesting as even the two-variable case is notoriously difficult in causal discovery~\citep{Reichenbach1956,peters2017elements}. Unlike in the more general setting, we no longer need to know the direction of the causal relationship between $T$ and $Y$. 

\begin{thmthm}
\label{thm:testable_implication_no_conf}
Let $\bfT^{(k)}=(T^{(k)}_1,\dots,T^{(k)}_{N_k})$ be the vector of all observed treatments in environments $E^{(k)}$; define $\bfY^{(k)}$ and $\bfU^{(k)}$ similarly. We consider the data distribution $P(\bfT^{(k)}, \bfY^{(k)}, \bfU^{(k)})$  without any observed confounders and $N_k\geq 2$ under assumption~\ref{asmp:faith_markov},\ref{asmp:shared_graph}, \ref{asmp:ICM} and \ref{asmp:hetero_ICM}. Then, for any $k=1,\dots,K$, there exists hidden confounding between $T$ and $Y$ in $\cG$ if and only if
\begin{equation}
    (i)\;\; T_j^{(k)} \not\indepP Y_i^{(k)} \mid T_i^{(k)} \;\; \text{ and } \;\;  (ii)\;\; T_j^{(k)} \not\indepP Y_i^{(k)} \mid Y_j^{(k)} \;\; \forall i,j=1,\dots,N_k : i\neq j~.
\end{equation}
\end{thmthm}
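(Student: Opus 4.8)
The plan is to mirror the proof strategy of Theorem~\ref{thm:testable_implication}, working with d-separations in the unrolled augmented graph $\hcG$, but now without the observed confounder $X$ and with the extra freedom that the causal direction between $T$ and $Y$ is unknown. The key structural fact is the same: because the mechanisms $\mT^{(k)}, \mY^{(k)}, \mU^{(k)}$ are shared across all $N_k$ observations within environment $k$ and are mutually independent (Assumption~\ref{asmp:hetero_ICM}), they act as latent common parents that connect observations $i$ and $j$. A path between $T_j^{(k)}$ and $Y_i^{(k)}$ through such a shared mechanism node is open precisely when that mechanism is \emph{not} conditioned on and when the mechanism genuinely feeds into both endpoints of the path. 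So first I would, as before, reduce the statement to a finite enumeration: under the faithfulness and causal Markov assumptions (Assumption~\ref{asmp:faith_markov}), each $\indepP$ statement is equivalent to the corresponding d-separation in $\hcG$, so it suffices to prove the ``iff'' entirely at the level of graph separation.

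Second, I would enumerate the candidate DAGs $\cG$ over $\{T, Y, U\}$ that are compatible with the setting. Since we no longer fix the $T$--$Y$ direction, the relevant cases are: (a) $U$ confounds $T$ and $Y$ (edges $U\to T$ and $U\to Y$ present), versus (b) all configurations where $U$ is not a common cause — $U$ unrelated, $U$ a cause of only one of $T,Y$, $U$ a child, etc., combined with either orientation $T\to Y$ or $Y\to T$ or no edge. The claim is symmetric in a way the single-confounder theorem is not, and this symmetry is exactly what buys the two-condition characterization. The crux is to show that conditions $(i)$ and $(ii)$ hold \emph{simultaneously} iff we are in case (a). I would argue the forward direction by exhibiting, in case (a), the open path $T_j^{(k)} \leftarrow \mT^{(k)} \to T_i^{(k)}$-style connections — more precisely the path through the shared hidden-mechanism node $\mU^{(k)}$, namely $T_j^{(k)} \leftarrow U_j^{(k)} \leftarrow \mU^{(k)} \to U_i^{(k)} \to Y_i^{(k)}$, and verify it survives conditioning on $T_i^{(k)}$ (for $(i)$) and separately on $Y_j^{(k)}$ (for $(ii)$), checking no collider on the path is being opened or blocked the wrong way.

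For the converse, I would show that in every graph of type (b) at least one of $(i)$, $(ii)$ becomes an independence (a d-separation), so that the conjunction of the two non-independencies fails. This is where the two conditions earn their keep: a single conditioning set can fail to separate for the ``wrong'' reason (e.g. when $T\to Y$ directly with no hidden confounder, $T_j^{(k)}$ and $Y_i^{(k)}$ may still be dependent through the shared $T$-mechanism and the $T\to Y$ edge), but requiring \emph{both} $(i)$ and $(ii)$ rules these spurious cases out. Concretely, for each non-confounding configuration I would identify which conditioning set, $\{T_i^{(k)}\}$ or $\{Y_j^{(k)}\}$, blocks every path between $T_j^{(k)}$ and $Y_i^{(k)}$ in the unrolled graph; by the symmetry between the roles of $T$ and $Y$ the two conditions cover the two orientations. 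I expect the main obstacle to be the careful bookkeeping of collider behavior: conditioning on $T_i^{(k)}$ or $Y_j^{(k)}$ can \emph{open} a path if either is a collider (or an ancestor of a conditioned collider) on some route through the shared mechanisms, so I would need to check systematically, for each configuration, that no such inadvertently-opened path reconnects the endpoints — this is the step most prone to a missed case, and I would handle it by the same unrolling-and-exhaustive-enumeration discipline used in the proof of Theorem~\ref{thm:testable_implication}.
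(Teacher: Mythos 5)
Your proposal is correct and follows essentially the same route as the paper: translate both independence statements to d-separations in the unrolled augmented graph via faithfulness and the causal Markov property, exhaustively enumerate the DAGs over $(T,Y,U)$ with the $T$--$Y$ orientation left free, exhibit the open path $T_j^{(k)} \leftarrow U_j^{(k)} \leftarrow \mU^{(k)} \to U_i^{(k)} \to Y_i^{(k)}$ in the confounded case, and verify that every unconfounded configuration satisfies at least one of the two d-separations (the paper records this enumeration in Table~\ref{tab:proof_2}). Your diagnosis of why two conditions are needed -- each one alone can be spuriously violated under one orientation of the $T$--$Y$ edge through the shared $\mT^{(k)}$ or $\mY^{(k)}$ node -- matches the paper's reasoning.
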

\citet{guo2022finetti} studied a similar setting to Theorem~\ref{thm:testable_implication_no_conf} and demonstrated how to decide the causal direction between $T$ and $Y$ in this case when there is no latent variable. Our results extend theirs as we now also show how to exclude the possibility of a latent common cause in this setting. The proof is similar to that of Theorem~\ref{thm:testable_implication}, but the conditional independencies are different. Firstly, we have $T_j^{(k)} \not\indepP Y_i^{(k)} \mid T_i^{(k)}$ which is the conditional independence in Theorem~\ref{thm:testable_implication} without conditioning on $X_i^{(k)}$ and $X_j^{(k)}$. Secondly, we have $T_j^{(k)} \not\indepP Y_i^{(k)} \mid Y_j^{(k)}$. This one is necessary as we no longer assume anything about the ancestral relationship between treatment and outcome. If we had assumed that $T$ could not be a descendant of $Y$, we can show that only condition (i) in the theorem is necessary. Similarly, condition (ii) is only necessary when $Y$ could not be a descendant of $T$. 

\subsection{Influence of the assumptions}
\label{sec:importance_assumptions}

Our theory shows how to test for hidden confounding, but it now relies on other untestable assumptions: namely non-degenerate independent causal mechanisms and the faithfulness \& causal Markov property. Due to this, we investigate the necessity of these assumptions and identify various failure cases when they are violated. On a more positive note, we also demonstrate that the assumption of non-degenerate mechanisms can be weakened. We present here the main conclusions regarding violations on two of the assumptions while more elaborate explanations can be found in Appendix~\ref{app:violations}, together with a demonstration of how our procedure can fail due to faithfulness violations as well as a discussion on assumptions about positivity and selection bias. 

\paragraph{Violation of Assumption~\ref{asmp:ICM}: dependent causal mechanisms}
What happens if any of the pair-wise independencies between $\mT,\mY, \mX$ or $\mU$ are violated? To investigate this, we go through the same procedure for proving Theorem~\ref{thm:testable_implication} where we allow any of these mechanisms to be dependent. We find that $T_j^{(k)} \indepP Y_i^{(k)} \mid T_i^{(k)}, X_i^{(k)}, X_j^{(k)}$ can be violated even when there is no confounding in all but one case with dependent mechanisms, meaning that it no longer works for detecting hidden confounding. The only case where our theory still works is when $\mX\not\indepP\mU$ -- i.e. the mechanisms of the observed and unobserved confounders are allowed to co-vary across environments.

\paragraph{Violation of Assumption~\ref{asmp:hetero_ICM}: degenerate causal mechanisms}
What happens if one or more of the distributions $P(\mT)$, $P(\mY)$, $P(\mX)$ and $P(\mU)$ are degenerate, meaning that some mechanisms are fixed across all environments? In the most extreme case, if all mechanisms are fixed then the distribution $P_E$ would be identical in each environment. We investigate these scenarios by first adding $\mT$, $\mY$, $\mX$ and/or $\mU$ to the conditioning set of the independence in Theorem~\ref{thm:testable_implication}. Then, we check whether this independence still is violated in the presence of hidden confounding using the same procedure used for proving the theorem.  
We find that the theorem fails only when we condition on both $\mT$ and $\mU$. In other words, it is only strictly necessary for our theory that changes in $P_{\mT}(T\mid \Pa(T))$ or $P_{\mU}(U\mid \Pa(U))$ occur between environments.

\begin{thmrem}
We may now identify a more conservative interpretation of our proposed procedure. First, one can verify the assumption of non-degenerate causal mechanisms by checking from data whether $P_E(T\mid X)$ varies across environments; if it does, then that is likely because $\mT$ and/or -- through potential downstream effects -- $\mU$ are non-degenerate. Next, we would run our proposed procedure. Now if the null is rejected then we can be conservative by concluding that this is either because we have hidden confounding and/or dependent mechanisms. But in the case of no rejection, it can only be interpreted as having no hidden confounders present. This is because having dependent causal mechanisms (violation of assumption~\ref{asmp:ICM}) can only cause false positives.
\end{thmrem}

\SetKwInput{kwInput}{Input}
\begin{algorithm*}[t]
    \kwInput{Datasets $D_{k}=\{(t_{i}^{(k)}, y_{i}^{(k)}, x_{i}^{(k)})\}^{N_k}_{i=1}$ from environments $k=1,\dots,K$; significance level~$\alpha$; minimum number of environments required for hypothesis test $K_{\text{min}}$}

    \setlength{\parskip}{0.5em} 
    
    $L_{\text{max}} \gets \text{ceiling}(\max_k N_{k} / 2)$ \hfill The maximum number of possible hypothesis tests
    
    \For{$i=1,\dots,L_{\text{max}}$}{
        \vspace{0.05em}
        $\pi_{2i} \gets \{ k \in [K] : N_{k} \leq 2i \}$ \hfill Retrieve all environments with at least $2i$ samples
        
        \If{size$(\pi_{2i}) < K_{\text{min}}$}{
            \vspace{0.05em}
            $L_{\text{max}}\gets i-1$ \hfill Stop for-loop if we run out of a sufficient number of environments
            
            break
        }
        $p_i \gets$ cond\_indep\_test ($t_{2i-1}^{\pi_{2i}} \indep y_{2i}^{\pi_{2i}} \mid t_{2i}^{\pi_{2i}}, x_{2i-1}^{\pi_{2i}}, x_{2i}^{\pi_{2i}}$) \hfill Get p-value from independence test
    }
    
    $z \gets -2 \sum_{i=1}^{L_{\text{max}}} \log (p_i)$ \hfill Aggregate p-values with Fisher's method
    
    $p \gets 1 - \text{cdf}_{\chi^2_{2L_{\text{max}}}}(z)$ \hfill Compute global p-value
    
    \Return $p \leq \alpha$
 \caption{Algorithm for statistically testing the presence of hidden confounding}
 \label{alg:meta_alg}
\end{algorithm*}

\subsection{Testing the independence}
\label{sec:statistical_testing}

Here, we explain how test the conditional independence $T_j^{(k)} \indepP Y_i^{(k)} \mid T_i^{(k)}, X_i^{(k)}, X_j^{(k)}$ from Theorem~\ref{thm:testable_implication} using multi-environment data; the full procedure is summarized in Algorithm~\ref{alg:meta_alg} where we have defined $t_{2i-1}^{\pi_{2i}} \vcentcolon =\{ t_{2i-1}^{(k)} \}_{k\in\pi_{2i}}$ and similarly for $y_{2i}^{\pi_{2i}}, t_{2i}^{\pi_{2i}}, x_{2i-1}^{\pi_{2i}}$ and $x_{2i}^{\pi_{2i}}$.

To test $T_j^{(k)} \indepP Y_i^{(k)} \mid T_i^{(k)}, X_i^{(k)}, X_j^{(k)}$, we need to simulate sampling from the joint distribution $P(T_i^{(k)}, Y_i^{(k)}, X_i^{(k)}, T_j^{(k)}, Y_j^{(k)}, X_j^{(k)})$. Note here that we do not condition on $E^{(k)}$. The idea is as follows: we select two different observations $i$ and $j$ from all environments such that we get a vector of observed treatments $t_i=(t_i^{(1)}, t_i^{(2)}, \dots, t_i^{(K)})$; outcomes $y_i=(y_i^{(1)}, y_i^{(2)}, \dots, y_i^{(K)})$; and so on for $x_i, t_j$ and $x_j$. Then, we can use any suitable method for conditional independence testing with $t_{i}, y_{i}, x_{i}, t_{j}$ and $x_{j}$. Note that the choice of observations within each environment is arbitrary as long as we do not pick the same observation for $i$ and $j$, this is a consequence of observations being i.i.d. within each environment. 

\paragraph{Increasing power of test with Fisher's method} 
In essence, we perform a conditional independence test where the ``sample size'' of the test is the number of available environments. Thus, for a small number of environments, our test might have low power (probability of detecting hidden confounding when it is present). To alleviate this issue, we recognize that we can perform this test multiple times if we have many samples $N_k$ per environment. Then, we select new observations from every environment for each hypothesis test until all observations have been used up. It is important to note that we only select from environments where there still are observations that have not yet been used for the hypothesis testing. Since each hypothesis test is independent and has the same null, we can aggregate the p-values from all tests using Fisher's method to obtain a global hypothesis test~\citep{fisher1925statistical}. As we show in our experiments, using Fisher's method drastically improves the power of our method, thus reducing the number of environments needed to detect the presence of hidden confounding. Having a different number of samples per environment $N_k$ also necessitates specifying the hyperparameter $K_{\text{min}}$, which determines the minimum observations required in each hypothesis test. This parameter should be chosen to ensure that the used independent testing method works properly if it is provided with at least $K_{\text{min}}$ samples.

\section{Experiments}
\label{sec:experiments}

To evaluate and investigate the theory for testing hidden confounding in multi-environment data, we perform a series of simulation studies with synthetic data in addition to experiments with semi-synthetic data generated using the Twins dataset~\citep{almond2005costs,louizos2017causal}.\footnote{Code available at \href{https://github.com/RickardKarl/detect-hidden-confounding}{github.com/RickardKarl/detect-hidden-confounding}.}. As we want to evaluate our method's ability to detect confounding, we use data where the ground-truth causal graph is known.
Unless otherwise stated, each experiment is repeated 50 times where we use a significance level $\alpha=0.05$. Depending on the variable types in the experiment, we state what suitable conditional independence testing method is used by our algorithm.

\begin{figure*}[t]
    \centering
    \begin{subfigure}{0.27\textwidth}
        \centering
        \includegraphics[width=\textwidth]{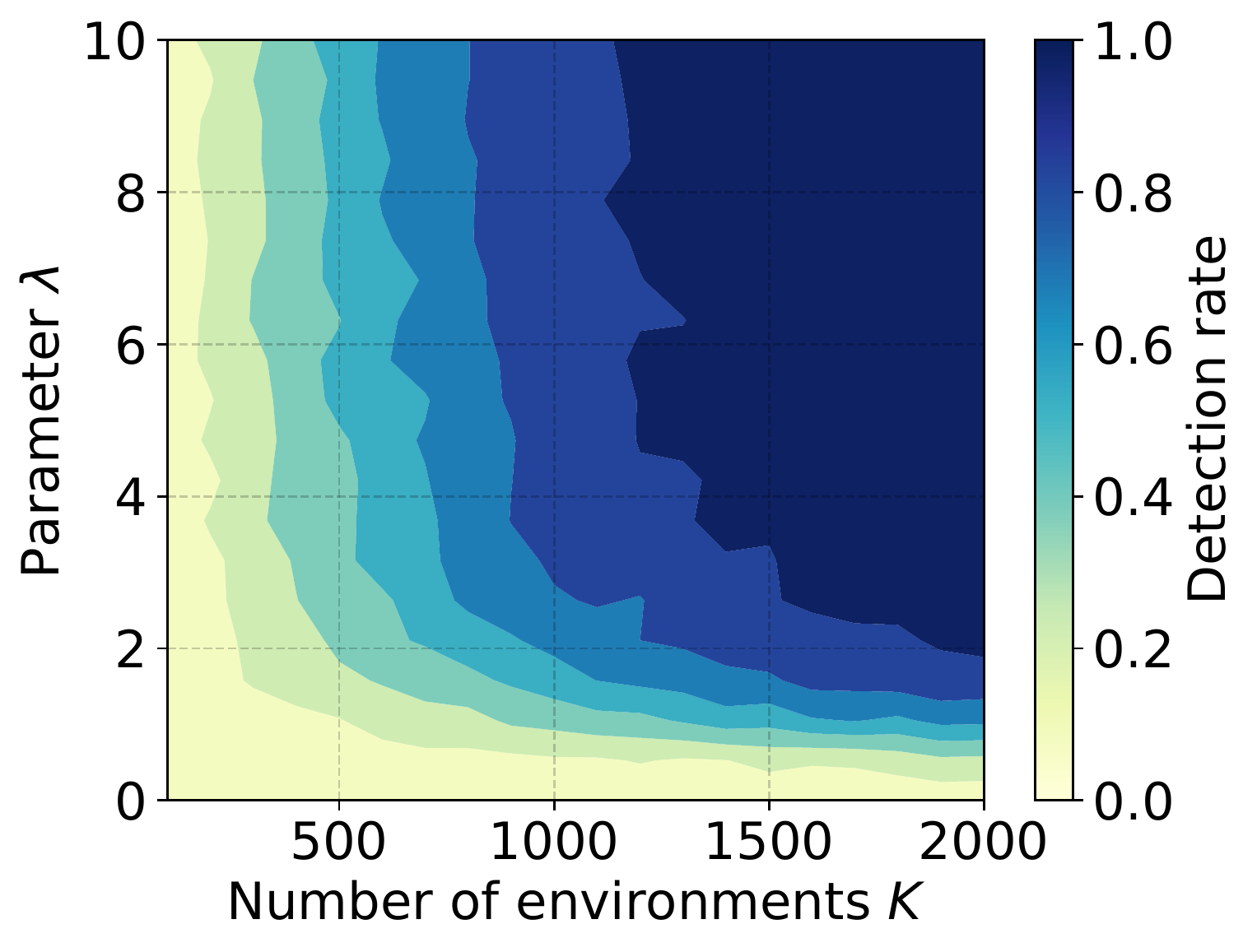}
        \caption{Varying $\lambda$.}
        \label{fig:confounding_effect}
    \end{subfigure}
    ~
    \begin{subfigure}{0.27\textwidth}
        \centering
        \includegraphics[width=\textwidth]{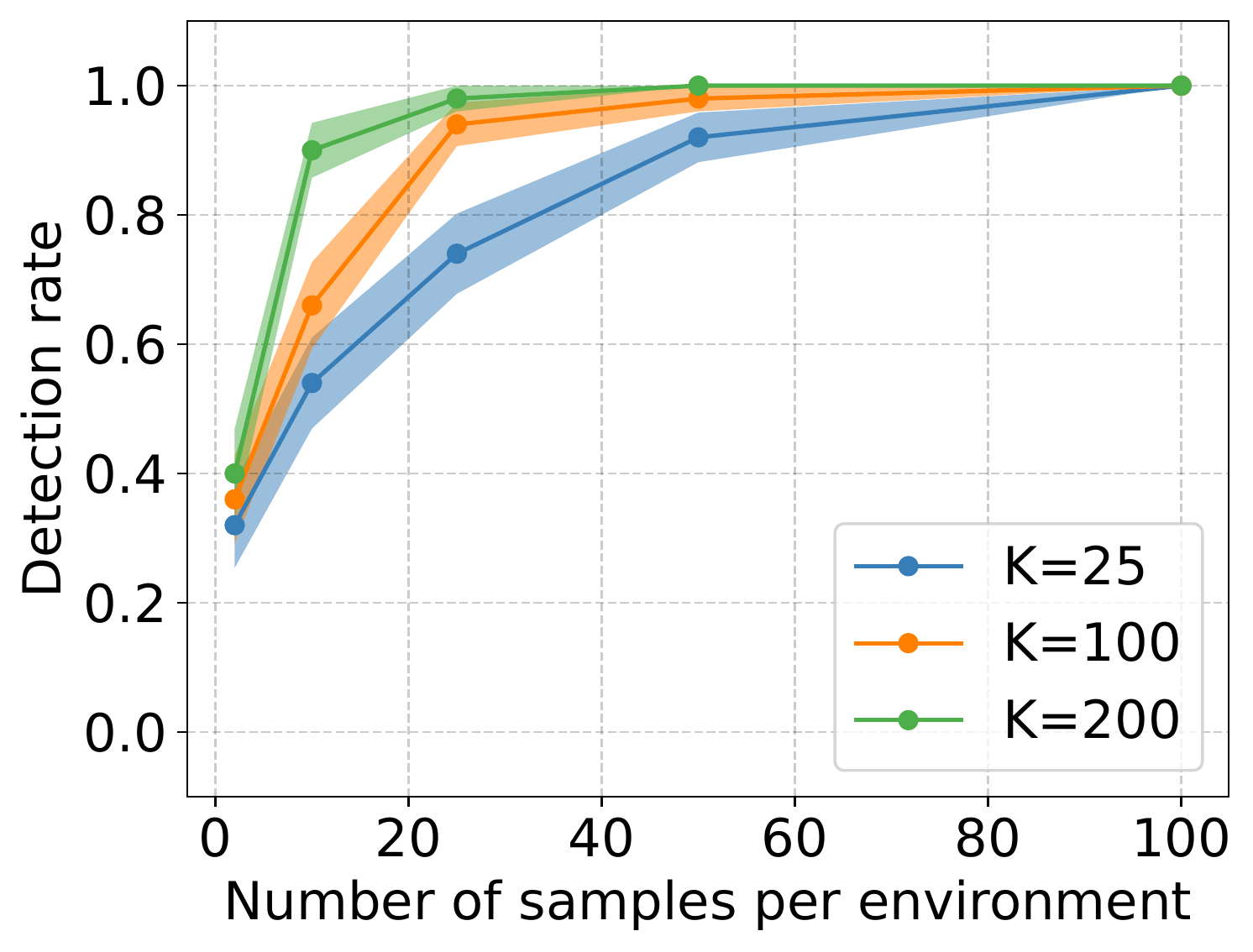}
        \caption{$\lambda = 5$}
        \label{fig:vary_env_samples_lambda5}
    \end{subfigure}
    ~
    \begin{subfigure}{0.27\textwidth}
        \centering
        \includegraphics[width=\textwidth]{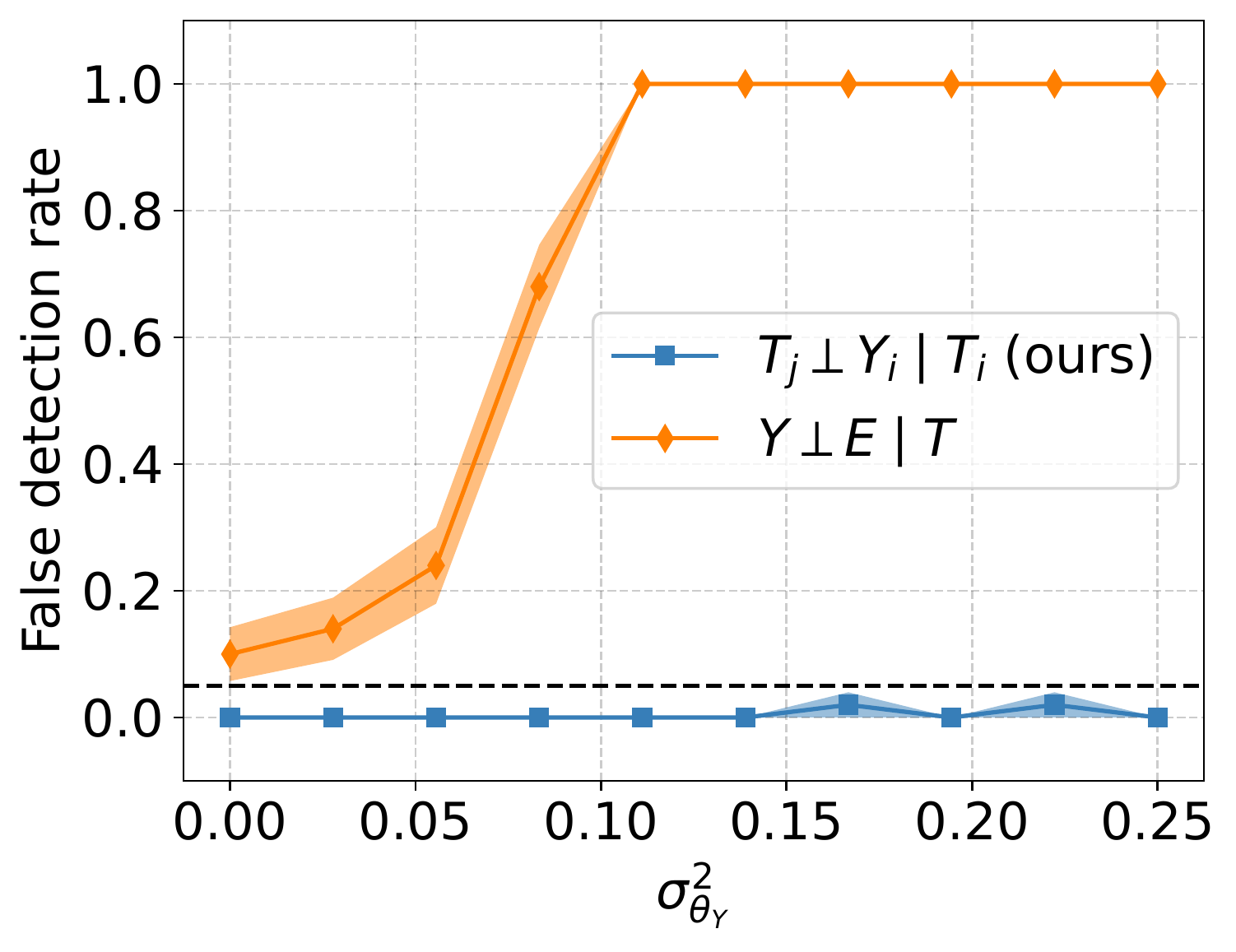}
        \caption{$\lambda=0$ (no confounding)}
        \label{fig:comparison}
    \end{subfigure}
    \caption{\textbf{Synthetic data} -- \textbf{(a)}: Detecting confounding with $N_k=2$ across a range of confounder effect sizes and numbers of environments $K$. (500 repetitions) \textbf{(b)}: Simulations with fixed confounding strength $\lambda=5$ for $N_k>2$ with a small number of environments $K$. \textbf{(c)}: Comparing the proposed procedure and an alternative testing procedure by varying the standard deviation of $\mY$ in the absence of confounding. The black dashed line corresponds to the desired type 1 error control $\alpha=0.05$. The shaded area shows the standard error from 50~repetitions.}
    \label{fig:synthetic}
\end{figure*}
\begin{figure*}[t]
    \centering
    \begin{subfigure}{0.27\textwidth}
        \centering
        \includegraphics[width=\textwidth]{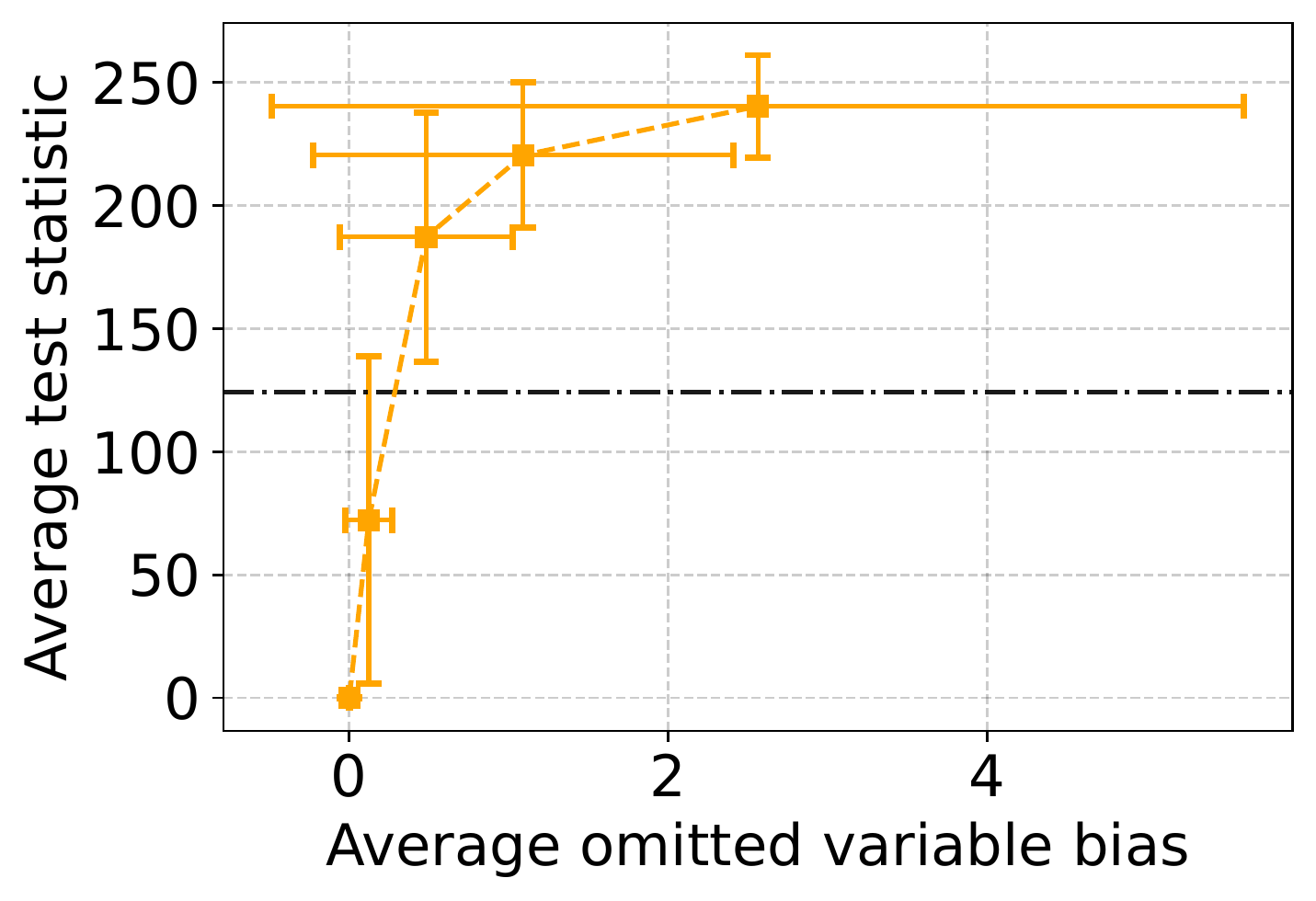}
        \caption{}
        \label{fig:bias}
    \end{subfigure}
    ~
    \begin{subfigure}{0.27\textwidth}
        \centering
        \includegraphics[width=\textwidth]{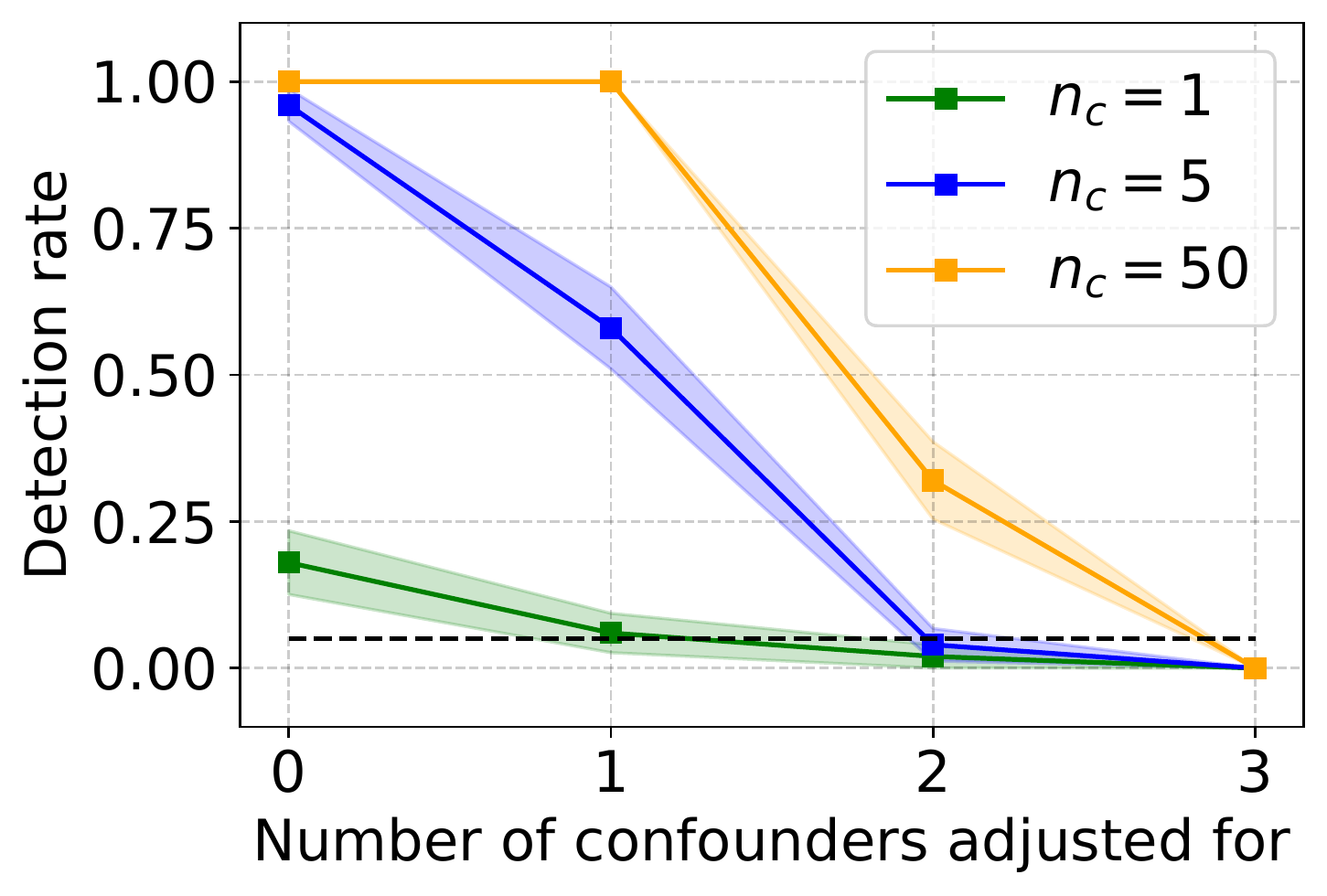}
        \caption{$p=3$}
        \label{fig:adj_3}
    \end{subfigure}
    ~
    \begin{subfigure}{0.27\textwidth}
        \centering
        \includegraphics[width=\textwidth]{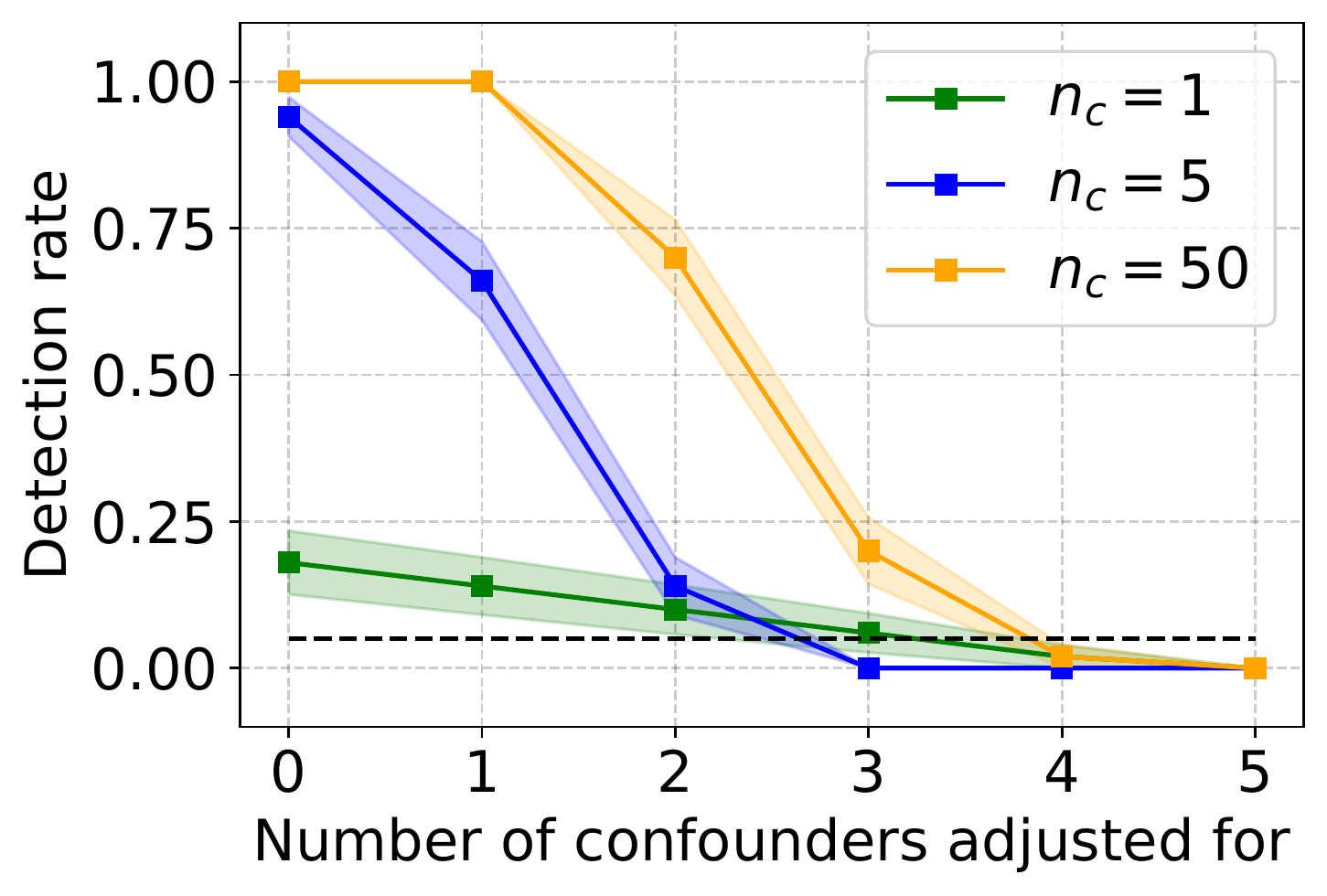}
        \caption{$p=5$}
        \label{fig:adj_5}
    \end{subfigure}
    
    \caption{\textbf{Twins dataset} -- \textbf{(a)}: Effect of bias from omitting confounders on the test statistic of our hypothesis test. \textbf{(b), (c)}: Performance when adjusting for observed confounders with either a total of 3 or 5 confounders in the data. The different curves correspond to combining numbers of hypothesis tests. The black dashed line corresponds to the rejection threshold / desired type 1 error $\alpha=0.05$ in all figures, and the error bars / shaded area shows standard deviation (figure a) or standard error (figures b and c) from 50 repetitions.}
    \label{fig:twins}
\end{figure*}

\subsection{Synthetic data}
\label{sec:synthetic}
For the synthetic data experiments, we generate data as follows: we have the confounder $U_{i}^{(k)}\sim \normal (\mU^{(k)}, 1)$; treatment $T_{i}^{(k)} \sim \bern(\sigmoid (U_{i}^{(k)} + \mT^{(k)} ))$; and outcome $Y_{i}^{(k)} \sim \bern(\sigmoid (\lambda U_{i}^{(k)} + T_{i}^{(k)} + \mY^{(k)} ))$~.
Note $\sigmoid(x) = 1/(1+e^{-x})$ is the logistic function and $\Theta_V^{(k)} \sim \normal(0,\sigma_{\Theta_V}^2)$ for $V\in\{T,Y,U\}$.
Unless otherwise stated, we use $\sigma_{\mT}=\sigma_{\mU}=\sigma_{\mY} = 1$. We control the strength of confounding by varying $\lambda$, where $\lambda=0$ corresponds to no~confounding.

\paragraph{Larger confounder effect sizes increase the probability of detection}
We investigate how the effect size of the confounding variable influences our proposed testing procedure. We vary $\lambda$ between 0 (no confounding) and 10 while also varying the number of environments. We perform this experiment with $N_k=2$ for all $k$ and use the G-test for conditional independence testing~\citep{mcdonald2009handbook}. The results are shown in Figure~\ref{fig:confounding_effect}. We note two things: the probability of detection grows for larger confounder effect size and it also grows when the number of environments is increased. 

\paragraph{The growth rate in detection depends on the number of environments}
We investigate the probability of detecting confounding when varying both the number of environments and the number of samples per environment for a fixed confounding strength $\lambda=5$. We use a permutation-based method for the conditional independence test~\citep{tsamardinos2010permutation} as we do not want to rely only on asymptotic validity (such as in the G-test) due to the limited number of environments. The results show that the performance of the testing procedure is highly dependent on the number of environments $K$, see \ref{fig:vary_env_samples_lambda5}. The probability of detection grows as we increase the number of samples. Noticeably, the rate of growth increases with 
the number of environments $K$.

\paragraph{Robustness to environmental changes}
We compare our proposed procedure to the alternative approach of testing $Y \indepP E \mid T$ to detect hidden confounding, the latter being valid when $E$ is an instrumental variable~\citep{mooij2020joint}. Here we test the sensitivity to violating one of its conditions, namely that $P_{E}(Y\mid T)$ is fixed under the null. We vary $\sigma_{\mY}$ between 0 and $\frac{1}{4}$ when there is no confounding by setting $\lambda=0$ with $N=100$ and $K=500$, and we use the G-test for conditional independence testing~\citep{mcdonald2009handbook}. As shown in Figure~\ref{fig:comparison}, the probability of false detection using $Y \indepP E \mid T$ increases when $\sigma_{\mY}$ starts to increase. Meanwhile, the false detection rate (type 1 error) remains bounded by $\alpha=0.05$ for our procedure as desired. In Appendix~\ref{app:experiments}, we also include the same comparison when confounding is present to confirm that our method is able to detect confounding in this case.

\subsection{Twins dataset}
\label{sec:twins}
We use data from twin births in the USA between 1989-1991~\citep{almond2005costs,louizos2017causal} to construct an observational dataset with continuous treatment/outcome and non-linear relationships. Here the environments are different states, and a notable element of our dataset is that all variations between environments stem solely from the real-world distribution shifts of the covariates between birth states.
The strength of confounding is controlled by a parameter $\lambda$, where $\lambda=0$ corresponds to no confounding. The full procedure for data generation is described in Appendix~\ref{app:twins}. 
For the following experiments, we use the Kernel Conditional Independence Test~\citep{zhang2012kernel} in our algorithm due to having continuous variables and, unless otherwise stated, combine 50 hypothesis tests using Fisher's~method.

\paragraph{Detection rate increases with bias from unobserved confounding}
We perform an experiment having $p=5$ unobserved confounders, where we vary confounding strength $\lambda$ between 0 and 5. We compute the bias from omitting the unobserved confounders when estimating the average treatment effect of $T$ on $Y$ in each environment. We then compare the average bias to the test statistic computed by our algorithm averaged over multiple iterations. As observed in Figure~\ref{fig:bias}, the test statistic increases together with the bias. The black dashed line in the figure represents the rejection threshold at $\alpha=0.05$, hence we can see that for sufficient bias the method will detect it. 

\paragraph{Adjusting for observed confounders}
In the last experiments, we attempt to detect hidden confounding while also adjusting for observed confounders. We go from observing none to all confounders while having a confounding strength of $\lambda=5$. We do this for the case with either a total of $p=3$ or $p=5$ confounders, shown in Figure~\ref{fig:adj_3} and~\ref{fig:adj_5}, respectively. In addition, we investigate the influence of combining multiple hypothesis tests ($n_c$ denotes the number of tests) using Fisher's method.
We observe first that adjusting for more confounders leads to a decrease in detection rate, and that our desired type 1 error of $\alpha=0.05$ is controlled when we have adjusted for all confounders. Secondly, the performance deteriorates when the total number of confounders increases, as indicated by the detection rate, which is lower when adjusting for 4 confounders when $p=5$ than adjusting for 2 confounders when $p=3$. This is likely because the conditional independence test loses power as the conditioning set becomes larger~\citep{zhang2012kernel}. Thirdly, we see that the combination of multiple hypothesis tests using Fisher's method does improve the power of our algorithm. We did, however, not see any significant benefit in combining more than 50 hypothesis tests in these experiments. 

\section{Discussion}
\label{sec:conclusion}
In this work, we studied a setting where observational data has been collected from different heterogeneous environments in which the same treatment $T$, outcome $Y$, and covariates $X$ have been observed. We showed that assuming independent causal mechanisms, there exist testable conditional independencies that are violated in the presence of hidden confounders, for which we also proposed a statistical procedure to test these independencies from observed data. In many cases, with a sufficient number of environments, we show that we are able to detect confounding when it is present. While our main goal was to derive testable implications of hidden confounding, open questions remain on how to improve sample efficiency and tackle loss of power when adjusting for many observed confounders. Addressing these can lead to better tools for researchers to validate their causal assumptions and move towards making safer causal inferences.

\paragraph{Societal impact} Causal inference has a big influence on real-world decision-making as it lies at the core of many sciences, ranging from medicine to public policy. While our work has the potential to improve the soundness and safety of causal inference methodology, this research is still in its infancy and we caution careful use of this work, particularly in high-stakes settings.

\section*{Acknowledgements}
The authors thank Marco Loog, Stephan Bongers, Alexander Mey, and Frans Oliehoek, and the members of the Pattern Recognition lab for their invaluable feedback on earlier versions of this manuscript. We also thank our anonymous reviewers for their helpful comments and input.

\bibliographystyle{plainnat}
\bibliography{bibliography}

\newpage
\section*{Appendix}
\appendix

The appendix contains the following sections. 

\begin{enumerate}[label=\Alph*.]
\item \textbf{Future Work}: This section contains a discussion on open questions that we have identified in this work.
\item \textbf{Proofs}: This section provides the full proof of Theorem~\ref{thm:testable_implication} and~\ref{thm:testable_implication_no_conf}. 
\item \textbf{Alternative Test for Detecting Hidden Confounding in Two-variable Case}: We discuss the alternative approach of testing $Y\indepP E\mid T$ to detect hidden confounding in the two-variable case without any observed confounder. We demonstrate conditions when and when not it would work.
\item \textbf{Further Analysis on the Influence of the Assumptions}: We elaborate on the examples of violating the assumptions behind our theory, as discussed in Section~\ref{sec:importance_assumptions}. First, we provide concrete failure cases when we have dependent causal mechanisms. Secondly, we demonstrate how we can relax the assumptions on non-degenerate mechanisms. Thirdly, we also give an example of faithfulness violations in linear-Gaussian structural causal models. Finally, we discuss how both positivity violations and the presence of selection bias could influence our procedure.
\item \textbf{Generation of Twins Semi-synthetic Dataset}: We explain how to generate the semi-synthetic observational data using the Twins dataset.
\item \textbf{Additional Experiments}: We perform additional simulation studies, extending the results in the main paper with results with both continuous and binary data.  

\end{enumerate}

\newpage

\section{Future Work}

We have identified a set of open questions deriving from our work. Firstly, our theory applies to the common setting in causal inference with treatment $T$, outcome $Y$, and a possibly high-dimensional confounder $X$, in which we want to detect the presence of additional hidden confounding $U$ (that also can be high-dimensional). While this is arguably the most typical setting in causal inference, it is of interest to consider other scenarios with more variables and interactions between them. We conjecture that other testable implications exist for confounding in these settings that could be found with similar arguments as we use. A particularly interesting setting is when we observe a proxy to a hidden confounder, which can be used for adjustment instead~\citep{kuroki2014proxy,wang2018proxy}. In this case, it is no longer straightforward to say whether there could be a hidden confounder that is unrelated to the proxy. We also believe that our techniques might be applicable to scenarios with instrumental variables (IV)~\citep{angrist1996identification,martens2006instrumental}, to test whether there exists any confounding between the IV and treatment which is a requirement for valid IV estimation.

Secondly, our theorem fundamentally relies on a set of untestable assumptions: independent \& non-degenerate causal mechanisms and the faithfulness \& causal Markov property. Although we investigated various violations, these results raised new questions. In particular, the effect of faithfulness violations, perhaps surprisingly, had a large influence on our procedure. Therefore, it is important to understand whether similar observations can be made in more realistic settings.

Thirdly, an interesting direction for future research would be to investigate how our approach can be used to estimate confounding strength, to be used in well-studied approaches in sensitivity analysis~\citep{rosenbaum1983sensitivity,cinelli2019sensitivity}.

Lastly, while our main goal was to derive testable implications of hidden confounding, there are opportunities to improve the way we test these from data. We observed that our proposed procedure sometimes requires a large number of environments. While it is unclear whether this is a property of the theory or the lack of efficiency in the test procedure we used, we note that combining multiple hypothesis tests using Fisher's method helped with performance. A possible research direction could be to investigate how to refine this approach further. Further, we observed performance deteriorating as we adjusted for more observed confounders, likely due to the curse of dimensionality~\citep{zhang2012kernel}. A promising solution here could be to use popular dimensionality-reduction techniques from causal inference such as the propensity score~\citep{rosenbaum1983central}.

\section{Proofs}

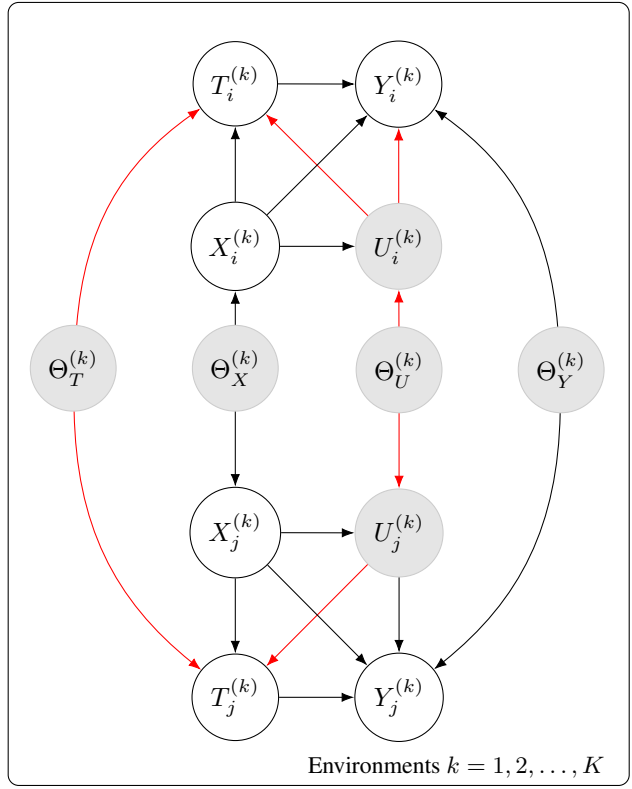
\begin{figure}[t]
    \centering
    \begin{tikzpicture}

    \node[state] (t2) at (0,0) {$T_j^{(k)}$};
    
    \node[state] (y2) [right =of t2] {$Y_j^{(k)}$};
    
    \node[state] (x2) [above =of t2 ] {$X_j^{(k)}$};
     
    \node[latent] (u2) [above =of y2 ] {$U_j^{(k)}$};


    \node[latent] (mX) [above =of x2] {$\mX^{(k)}$};
    \node[latent] (mU) [above =of u2] {$\mU^{(k)}$};
    \node[latent] (mT) [left =of mX] {$\mT^{(k)}$};
    \node[latent] (mY) [right =of mU] {$\mY^{(k)}$}; 
    
    \node[state] (x1) at (0,6) {$X_i^{(k)}$};
     
    \node[latent] (u1) [right =of x1 ] {$U_i^{(k)}$};
    
    \node[state] (t1) [above =of x1] {$T_i^{(k)}$};
    
    \node[state] (y1) [above =of u1] {$Y_i^{(k)}$};

    \path (mT) edge[bend left=25, red] (t1);
    \path (mT) edge[bend right=25, red] (t2);
    \path (mY) edge[bend right=25] (y1);
    \path (mY) edge[bend left=25] (y2);
    \path (mU) edge[red] (u1);
    \path (mU) edge[red] (u2);
    \path (mX) edge (x1);
    \path (mX) edge (x2);
    
    \path (t1) edge (y1);
    \path (u1) edge[red] (y1);
    \path (u1) edge[red] (t1);
    \path (x1) edge (y1);
    \path (x1) edge (t1);
    \path (x1) edge (u1);
    \path (t2) edge (y2);
    \path (u2) edge (y2);
    \path (u2) edge[red] (t2);
    \path (x2) edge (y2);
    \path (x2) edge (t2);
    \path (x2) edge (u2);

    \plate [inner sep=.3cm,xshift=.02cm,yshift=.2cm] {plate2} {(t1) (y1) (x1) (u1) (t2) (y2) (x2) (u2) (mX) (mU) (mT) (mY)} {Environments $k=1,2,\dots,K$};
    
\end{tikzpicture}
    \caption{Example of unrolling the augmented causal DAG $\hcG$ for some pair of observations $(i,j)$. The samples are generally not independent due to the shared mechanisms $(\mT,\mX,\mU,\mY)$, unless we condition on them. Confounding is present, and the red edges mark open paths such that $T_j^{(k)} \not\dsep Y_i^{(k)}  \mid T_i^{(k)} , X_i^{(k)} , X_j^{(k)}$ in this graph. Note that these paths go through the outgoing edges from $U_i^{(k)}$ to $T_i^{(k)}$ and $Y_i^{(k)}$ (and similarly for $j$), and that the paths are closed if either of the edges is removed.}
    \label{fig:proof_example}
\end{figure}

In this section, we present the proof for Theorem~\ref{thm:testable_implication} and~\ref{thm:testable_implication_no_conf}. Let $\bfT^{(k)}=(T^{(k)}_1,\dots,T^{(k)}_{N_k})$ be the vector of all observed treatments in environments $E^{(k)}$. Define $\bfY^{(k)}$, $\bfX^{(k)}$, and $\bfU^{(k)}$ similarly.

\begin{manualtheorem}{1}
We consider the data distribution $P(\bfT^{(k)}, \bfY^{(k)}, \bfX^{(k)}, \bfU^{(k)})$ with $N_k\geq 2$ under assumption~\ref{asmp:faith_markov},\ref{asmp:shared_graph}, \ref{asmp:ICM} and \ref{asmp:hetero_ICM}. Furthermore, assume an underlying causal DAG $\cG$ where $Y$ is not an ancestor of $T$, and that $X$ is a known common cause to $T$ and $Y$. Then, for any $k=1,\dots,K$, there exists hidden confounding between $T$ and $Y$ in $\cG$ if and only if
\begin{equation}
    T_j^{(k)} \not\indepP Y_i^{(k)} \mid T_i^{(k)}, X_i^{(k)}, X_j^{(k)} \;\; \forall i,j=1,\dots,N_k : i\neq j~.
\end{equation}
\end{manualtheorem}
\begin{proof}
We constrain ourselves to DAGs $\cG$ with variables $(T, Y, X U)$ where $X$ is a known common cause to both $T$ and $Y$, and $Y$ is not an ancestor of $T$ in $\cG$. Under the assumption of non-degenerate, independent causal mechanisms (Assumption \ref{asmp:ICM} and \ref{asmp:hetero_ICM}), we can introduce the mechanisms $\Theta_V$ for each variable $V\in\{T,Y,X,U\}$.  Further, we can augment $\cG$ with a hierarchical structure~\citep[Chapter~5]{gelman2013bayesian}, wherein we first sample the mechanisms i.i.d. $\Theta_V^{(k)}\sim P(\Theta_V )$ for $k=1,\dots,K$ and $V\in\{T,Y,X,U\}$ and then, for each environment $k$, obtain $(T_i^{(k)}, Y_i^{(k)}, X_i^{(k)}, U_i^{(k)})$ by repeatedly sampling $N_{k}$ times conditioned on the mechanisms. 
We denote this augmented graph with the hierachical structure as $\hcG$, where edges between $(T_{i}^{(k)},Y_{i}^{(k)},X_{i}^{(k)},U_{i}^{(k)})$ are the same as for $(T,Y,X,U)$ and $\Theta_V^{(k)}\in\Pa(V_i^{(k)})$ where $V\in\{T,Y,X,U\}$, for all $i$ and $k$. An example of such an augmented graph $\hcG$ is shown in Figure~\ref{fig:proof_example}. Notably, this augmentation can be done for all $k$ because we assume that all environments share the same causal graph $\cG$ (Asssumption~\ref{asmp:shared_graph}).

Now, given the constraints that we have defined, we consider every combination of the edges between $(T, Y,X,U)$ that are DAGs. In total, there are 40 different DAGs that encompass all these combinations of edges. We say that $U$ is a confounder in one of these DAGs if both the edges $U\rightarrow T$ and $U\rightarrow Y$ exist. For each of these graphs $\cG$, we shall investigate the d-separations in its augmented version $\hcG$. Notably, due to the assumption of non-degenerate mechanisms (Assumption~\ref{asmp:hetero_ICM}), we allow open paths in $\hcG$ that go through $\mT^{(k)}, \mY^{(k)}, \mX^{(k)}$, or $\mU^{(k)}$. This means that two different observations $\left(T_i^{(k)}, Y_i^{(k)}, X_i^{(k)}, U_i^{(k)}\right)$ and $\left(T_j^{(k)}, Y_j^{(k)}, X_j^{(k)}, U_j^{(k)}\right)$  can be dependent when $i\neq j$ for $i,j = 1,\dots,N_k$. These paths are best illustrated by unrolling the augmented graph $\hcG$ as in the example in Figure~\ref{fig:proof_example}. However, such dependencies can only happen if we do not condition on the mechanisms (that is, the environment) as we know that the observations are sampled i.i.d. within each environment. This demonstrates the need for multiple environments, as the randomness from sampling $\left(\mT^{(k)}, \mY^{(k)}, \mX^{(k)}, \mU^{(k)}\right)$ allows us to treat them as ordinary random variables. To capture this randomness, we need to observe multiple environments though. Note that we also need $N_k\geq 2$ for $i\neq j$ to hold. 

We will pay attention to a particular d-separation in $\hcG$, namely 
\begin{equation*}
    T_j^{(k)} \dsep Y_i^{(k)} \mid T_i^{(k)}, X_i^{(k)}, X_j^{(k)}~.
\end{equation*} 
We automatically iterate over our list of DAGs using the \textit{dagitty} package in R~\citep{textor2017daggity} and check whether this d-separation holds; the results are displayed in Table~\ref{tab:proof_1}. We note that the shaded rows in the table are the cases where $U$ is a confounder, and these are the only cases where $T_j^{(k)}\dsep Y_i^{(k)} \mid T_i^{(k)}, X_i^{(k)}, X_j^{(k)}$ is violated in $\hcG$. In other words, checking this d-separation is sufficient to determine whether $U$ is a confounder in $\cG$. Assuming the faithfulness and causal Markov property (Assumption~\ref{asmp:faith_markov}), we have that: 
\begin{equation*}
    T_j^{(k)}\not\dsep Y_i^{(k)} \mid T_i^{(k)}, X_i^{(k)}, X_j^{(k)}
    \iff
    T_j^{(k)}\not\indepP Y_i^{(k)} \mid T_i^{(k)}, X_i^{(k)}, X_j^{(k)}~.
\end{equation*}
Consequently, it follows that 
\begin{equation*}
    T_j^{(k)}\not\indepP Y_i^{(k)} \mid T_i^{(k)}, X_i^{(k)}, X_j^{(k)}\;\;\text{ for }\; i\neq j
    \iff
    \text{U is a confounder to }T \text{ and }Y~.
\end{equation*}
This result holds for any $k$ since we assume that all environments share the same causal DAG (Assumption~\ref{asmp:shared_graph}).

\end{proof}

\begin{thmrem}
$X_j^{(k)}$ can be removed from the conditioning set in the independence of Theorem~\ref{thm:testable_implication} when we assume that the observed and unobserved confounders are independent of each other. In practice, however, we most likely would not like to make this assumption which is why we recommend to condition on both $X_i^{(k)}$ and $X_j^{(k)}$.
\end{thmrem}

\begin{manualtheorem}{2}
We consider the data distribution $P(\bfT^{(k)}, \bfY^{(k)}, \bfU^{(k)})$ without any observed confounders and $N_k\geq 2$ under assumption~\ref{asmp:faith_markov},\ref{asmp:shared_graph}, \ref{asmp:ICM} and \ref{asmp:hetero_ICM}. Then, for any $k=1,\dots,K$, there exists hidden confounding between $T$ and $Y$ in $\cG$ if and only if
\begin{equation}
    (i)\;\; T_j^{(k)} \not\indepP Y_i^{(k)} \mid T_i^{(k)} \;\; \text{ and } \;\;  (ii)\;\; T_j^{(k)} \not\indepP Y_i^{(k)} \mid Y_j^{(k)} \;\; \forall i,j=1,\dots,N_k : i\neq j~.
\end{equation}
\end{manualtheorem}
\begin{proof}
Using the same arguments as in the proof of Theorem~\ref{thm:testable_implication}, we can show that $T_j^{(k)} \not\indepP Y_i^{(k)} \mid T_i^{(k)}$ and $T_j^{(k)} \not\indepP Y_i^{(k)} \mid Y_j^{(k)}$ exclusively holds for those where there exists no common cause between $T$ and $Y$ when we only have the variables $(T,Y,U)$, excluding any observed confounder $X$. The corresponding table to check this is demonstrated in Table~\ref{tab:proof_2}.
\end{proof}

\section{Alternative Test for Detecting Hidden Confounding in Two-variable Case}
\label{app:jci}

In this section, we explain when (and when not) testing $Y\indepP E\mid T$ is a valid strategy for detecting hidden confounding between $T$ and $Y$. We require that $Y\indepP E \mid T, U$ when there is no confounding meaning that there can not exist any direct causal relationships $E\rightarrow Y$ or $E\rightarrow U$; if this does not hold then testing $Y\indepP E \mid T$ would not be informative. This is demonstrated with the dashed edges in Figure~\ref{fig:jci1}. Meanwhile, in Figure~\ref{fig:jci2}, we display the case where a violation of $Y\indepP E \mid T$ correctly detects hidden confounding being present. The conclusion is that this alternative test, which is easy to test, only works if the changes between environments happen for the conditional distribution $P_E(T\mid\Pa(T))$, as only $T$ may be directly caused by $E$ if this test shall work. This corresponds to $E$ being an instrumental variable~\citep{angrist1996identification}, meaning that $T\not\indepP E$,  $U \indepP E$, and the environment $E$ can not have any direct effect on the outcome $Y$ under the null (that is no confounding).

\begin{figure}[t]
    \centering
    \begin{subfigure}{0.4\textwidth}
        \centering
        \begin{tikzpicture}

    \node[state] (t) at (0,0) {$T$};

    \node[latent] (u) [above right =of t ] {$U$};
    
    \node[state] (y) [below right =of u] {$Y$};

    \node[state] (e) [above =of u] {$E$};
    
    \path (t) edge (y);
    \path (u) edge (y);
    \path (e) edge[bend right=20] (t);
    \path[dashed] (e) edge(u);
    \path[dashed] (e) edge[bend left=20]  (y);
    
\end{tikzpicture}
        \caption{}
        \label{fig:jci1}
    \end{subfigure}
    ~
    \begin{subfigure}{0.4\textwidth}
        \centering
        \begin{tikzpicture}

    \node[state] (t) at (0,0) {$T$};

    \node[latent] (u) [above right =of t ] {$U$};
    
    \node[state] (y) [below right =of u] {$Y$};

    \node[state] (e) [above =of u] {$E$};
    
    \path (t) edge (y);
    \path (u) edge (y);
    \path (u) edge (t);
    \path (e) edge[bend right=20] (t);

\end{tikzpicture}
        \caption{}
        \label{fig:jci2}
    \end{subfigure}    
    \caption{Examples of using $Y\indepP E \mid T$ to detect confounding. Gray corresponds to unobserved variables. \textbf{(a)} Violations of $Y\indepP E \mid T$ despite there being no confounding between $T$ and $Y$. Removing the two dashed edges would make $Y\indepP E \mid T$ hold. \textbf{(b)} Case with correct violation of $Y \indepP E \mid T$ when unobserved confounding is present.}
    \label{fig:jci_example}
\end{figure}
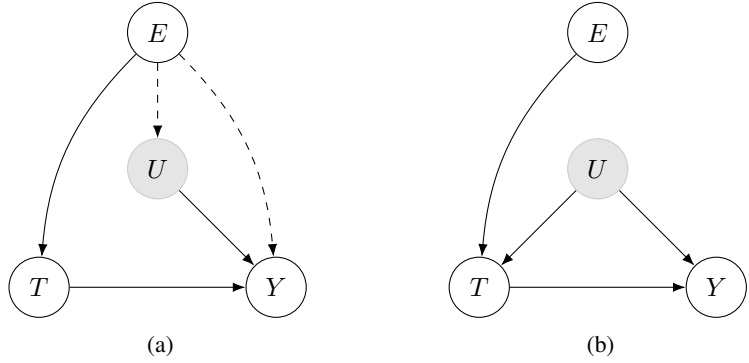

\section{Further Analysis on the Influence of the Assumptions}
\label{app:violations}
In this section, we present more elaboration on the examples of violating the assumptions behind our theory, as discussed in Section~\ref{sec:importance_assumptions}. First, we provide a counter-example that shows how Theorem~\ref{thm:testable_implication} fails when we have dependent causal mechanisms. Secondly, we demonstrate how we got to our conclusions on having degenerate mechanisms. Thirdly, we provide insights into when faithfulness violations could occur in an example with a linear-Gaussian structural causal model.
Finally, we discuss the analogy to the positivity assumption for our theory as well as how our procedure could be influenced by the presence of selection bias.

\subsection{Dependent Mechanisms}
We wish to demonstrate  examples of DAGs where the main condition 
\begin{equation*}
    T_j^{(k)}\indepP Y_i^{(k)} \mid T_i^{(k)}, X_i^{(k)}, X_j^{(k)}
\end{equation*}
in Theorem~\ref{thm:testable_implication} is violated despite that there is no confounding when some of the causal mechanisms are pairwise dependent. We are able to find examples of violations with all pairwise dependencies, except for $\mX \not\indepP \mU$. Hence, we conjecture that a dependency between the mechanism of the observed and unobserved confounder does not influence the test proposed by our theory. 
Figures~\ref{fig:example_TY}, \ref{fig:example_TX}, \ref{fig:example_TU}, \ref{fig:example_YX} and \ref{fig:example_YU} show found examples where violations occur between the mechanisms.

\begin{figure}[h]
    \centering
    \begin{subfigure}{0.49\textwidth}
    \centering
        \begin{tikzpicture}
    
    \node[state] (t2) at (0,0) {$T_j^{(k)}$};
    
    \node[state] (y2) [right =of t2] {$Y_j^{(k)}$};
    
    \node[state] (x2) [above =of t2 ] {$X_j^{(k)}$};
     
    \node[latent] (u2) [above =of y2 ] {$U_j^{(k)}$};


    \node[latent] (mX) [above =of x2]  {$\mX^{(k)}$};
    \node[latent] (mU) [above =of u2] {$\mU^{(k)}$};
    \node[latent] (mT) at (-1.6, 4.35) {$\mT^{(k)}$};
    \node[latent] (mY)  at (3.6, 4.35) {$\mY^{(k)}$}; 
    
    \node[state] (x1) at (0,6) {$X_i^{(k)}$};
     
    \node[latent] (u1) [right =of x1 ] {$U_i^{(k)}$};
    
    \node[state] (t1) [above =of x1] {$T_i^{(k)}$};
    
    \node[state] (y1) [above =of u1] {$Y_i^{(k)}$};

    \path (mT) edge[bend left=20] (t1);
    \path (mT)[red] edge[bend right=20] (t2);
    \path[red] (mY) edge[bend right=20] (y1);
    \path (mY) edge[bend left=20] (y2);
    \path (mU) edge(u1);
    \path (mU) edge (u2);
    \path (mX) edge (x1);
    \path (mX) edge (x2);
    
    \path (t1) edge (y1);
    \path (t2) edge (y2);
    \path (x1) edge (t1);
    \path (x2) edge (t2);
    \path (x1) edge (y1);
    \path (x2) edge (y2);

    \path[bidirected,red] (mT) edge[bend left=30] (mY);

    \plate [inner sep=0.1cm] {plate} {(t1) (y1) (x1) (u1) (t2) (y2) (x2) (u2) (mX) (mU) (mT) (mY)} {Environments $k=1,2,\dots,K$};

\end{tikzpicture}
        \caption{$\mT \not\indepP \mY$}
        \label{fig:example_TY}
    \end{subfigure}
    ~
    \begin{subfigure}{0.49\textwidth}
    \centering
        \begin{tikzpicture}

    \node[state] (t2) at (0,0) {$T_j^{(k)}$};
    
    \node[state] (y2) [right =of t2] {$Y_j^{(k)}$};
    
    \node[state] (x2) [above =of t2 ] {$X_j^{(k)}$};
     
    \node[latent] (u2) [above =of y2 ] {$U_j^{(k)}$};

    \node[latent] (mX) [above =of x2]  {$\mX^{(k)}$};
    \node[latent] (mU) [above =of u2] {$\mU^{(k)}$};
    \node[latent] (mT) at (-1.6, 4.35) {$\mT^{(k)}$};
    \node[latent] (mY)  at (3.6, 4.35) {$\mY^{(k)}$};
    
    \node[state] (x1) at (0,6) {$X_i^{(k)}$};
     
    \node[latent] (u1) [right =of x1 ] {$U_i^{(k)}$};
    
    \node[state] (t1) [above =of x1] {$T_i^{(k)}$};
    
    \node[state] (y1) [above =of u1] {$Y_i^{(k)}$};

    \path (mT) edge[bend left=20] (t1);
    \path[red] (mT) edge[bend right=20] (t2);
    \path (mY) edge[bend right=20] (y1);
    \path (mY) edge[bend left=20] (y2);
    \path (mU) edge(u1);
    \path (mU) edge (u2);
    \path[red] (mX) edge (x1);
    \path (mX) edge (x2);
    
    \path (t1) edge (y1);
    \path (t2) edge (y2);
    \path (x1) edge (t1);
    \path (x2) edge (t2);
    \path (x1) edge (y1);
    \path (x2) edge (y2);
    \path[red] (u1) edge (x1);
    \path (u2) edge (x2);
    \path[red] (u1) edge (y1);
    \path (u2) edge (y2);
    
    \path[bidirected,red] (mT) edge (mX);

    \plate [inner sep=0.1cm] {plate} {(t1) (y1) (x1) (u1) (t2) (y2) (x2) (u2) (mX) (mU) (mT) (mY)} {Environments $k=1,2,\dots,K$};

\end{tikzpicture}
        \caption{$\mT \not\indepP \mX$}
        \label{fig:example_TX}
    \end{subfigure}
    ~
    \begin{subfigure}{0.49\textwidth}
    \centering
        \begin{tikzpicture}

    \node[state] (t2) at (0,0) {$T_j^{(k)}$};
    
    \node[state] (y2) [right =of t2] {$Y_j^{(k)}$};
    
    \node[state] (x2) [above =of t2 ] {$X_j^{(k)}$};
     
    \node[latent] (u2) [above =of y2 ] {$U_j^{(k)}$};

    \node[latent] (mX) [above =of x2]  {$\mX^{(k)}$};
    \node[latent] (mU) [above =of u2] {$\mU^{(k)}$};
    \node[latent] (mT) at (-1.6, 4.35) {$\mT^{(k)}$};
    \node[latent] (mY)  at (3.6, 4.35) {$\mY^{(k)}$};
    
    \node[state] (x1) at (0,6) {$X_i^{(k)}$};
    
    \node[latent] (u1) [right =of x1 ] {$U_i^{(k)}$};
    
    \node[state] (t1) [above =of x1] {$T_i^{(k)}$};
    
    \node[state] (y1) [above =of u1] {$Y_i^{(k)}$};

    \path (mT) edge[bend left=20] (t1);
    \path[red] (mT) edge[bend right=20] (t2);
    \path (mY) edge[bend right=20] (y1);
    \path (mY) edge[bend left=20] (y2);
    \path[red] (mU) edge(u1);
    \path (mU) edge (u2);
    \path (mX) edge (x1);
    \path (mX) edge (x2);
    
    \path (t1) edge (y1);
    \path (t2) edge (y2);
    \path (x1) edge (t1);
    \path (x2) edge (t2);
    \path (x1) edge (y1);
    \path (x2) edge (y2);
    \path[red] (u1) edge (y1);
    \path (u2) edge (y2);
    
    \path[bidirected,red] (mT) edge[bend left=30] (mU);
    \plate [inner sep=0.1cm] {plate} {(t1) (y1) (x1) (u1) (t2) (y2) (x2) (u2) (mX) (mU) (mT) (mY)} {Environments $k=1,2,\dots,K$};
\end{tikzpicture}
        \caption{$\mT \not\indepP \mU$}
        \label{fig:example_TU}
    \end{subfigure}
    \caption{Violations of $T_j^{(k)} \indepP Y_i^{(k)} \mid T_i^{(k)}, X_i^{(k)}, X_j^{(k)}$ with dependent mechanisms despite that $X$ is a valid adjustment set for estimating $P(Y\mid do(T))$. Open paths between $T_j^{(k)}$ and $Y_i^{(k)}$ after conditioning on $T_i, X_i$ and $X_j$ are marked in red.}
    \label{fig:dependent1}
\end{figure}
\begin{figure}[h]
    \centering
    \begin{subfigure}{0.49\textwidth}
    \centering
        \begin{tikzpicture}

    \node[state] (t2) at (0,0) {$T_j^{(k)}$};
    
    \node[state] (y2) [right =of t2] {$Y_j^{(k)}$};
    
    \node[state] (x2) [above =of t2 ] {$X_j^{(k)}$};
     
    \node[latent] (u2) [above =of y2 ] {$U_j^{(k)}$};


    \node[latent] (mX) [above =of x2]  {$\mX^{(k)}$};
    \node[latent] (mU) [above =of u2] {$\mU^{(k)}$};
    \node[latent] (mT) at (-1.6, 4.35) {$\mT^{(k)}$};
    \node[latent] (mY)  at (3.6, 4.35) {$\mY^{(k)}$}; 
    
    \node[state] (x1) at (0,6) {$X_i^{(k)}$};
     
    \node[latent] (u1) [right =of x1 ] {$U_i^{(k)}$};
    
    \node[state] (t1) [above =of x1] {$T_i^{(k)}$};
    
    \node[state] (y1) [above =of u1] {$Y_i^{(k)}$};

    \path (mT) edge[bend left=20] (t1);
    \path (mT) edge[bend right=20] (t2);
    \path[red] (mY) edge[bend right=20] (y1);
    \path (mY) edge[bend left=20] (y2);
    \path (mU) edge (u1);
    \path (mU) edge (u2);
    \path (mX) edge (x1);
    \path[red] (mX) edge (x2);
    
    \path (t1) edge (y1);
    \path (t2) edge (y2);
    \path (x1) edge (t1);
    \path (x2) edge (t2);
    \path (x1) edge (y1);
    \path (x2) edge (y2);
    \path (u1) edge (x1);
    \path[red] (u2) edge (x2);
    \path (u1) edge (t1);
    \path[red] (u2) edge (t2);
    
    \path[bidirected,red] (mY) edge[bend right=30] (mX);
    
    \plate [inner sep=0.1cm] {plate} {(t1) (y1) (x1) (u1) (t2) (y2) (x2) (u2) (mX) (mU) (mT) (mY)} {Environments $k=1,2,\dots,K$};

\end{tikzpicture}
        \caption{$\mY \not\indepP \mX$}
        \label{fig:example_YX}
    \end{subfigure}
    ~
    \begin{subfigure}{0.49\textwidth}
    \centering
        \begin{tikzpicture}

    \node[state] (t2) at (0,0) {$T_j^{(k)}$};
    
    \node[state] (y2) [right =of t2] {$Y_j^{(k)}$};
    
    \node[state] (x2) [above =of t2 ] {$X_j^{(k)}$};
     
    \node[latent] (u2) [above =of y2 ] {$U_j^{(k)}$};


    \node[latent] (mX) [above =of x2]  {$\mX^{(k)}$};
    \node[latent] (mU) [above =of u2] {$\mU^{(k)}$};
    \node[latent] (mT) at (-1.6, 4.35) {$\mT^{(k)}$};
    \node[latent] (mY)  at (3.6, 4.35) {$\mY^{(k)}$}; 
    
    \node[state] (x1) at (0,6) {$X_i^{(k)}$};
     
    \node[latent] (u1) [right =of x1 ] {$U_i^{(k)}$};
    
    \node[state] (t1) [above =of x1] {$T_i^{(k)}$};
    
    \node[state] (y1) [above =of u1] {$Y_i^{(k)}$};

    \path (mT) edge[bend left=20] (t1);
    \path (mT) edge[bend right=20] (t2);
    \path[red] (mY) edge[bend right=20] (y1);
    \path (mY) edge[bend left=20] (y2);
    \path (mU) edge (u1);
    \path[red] (mU) edge (u2);
    \path (mX) edge (x1);
    \path (mX) edge (x2);
    
    \path (t1) edge (y1);
    \path (t2) edge (y2);
    \path (x1) edge (t1);
    \path (x2) edge (t2);
    \path (x1) edge (y1);
    \path (x2) edge (y2);
    \path (u1) edge (t1);
    \path[red] (u2) edge (t2);
    
    \path[bidirected,red] (mY) edge (mU);
        
    \plate [inner sep=0.1cm] {plate} {(t1) (y1) (x1) (u1) (t2) (y2) (x2) (u2) (mX) (mU) (mT) (mY)} {Environments $k=1,2,\dots,K$};

\end{tikzpicture}
        \caption{$\mY \not\indepP \mU$}
        \label{fig:example_YU}
    \end{subfigure}
    \caption{Violations of $T_j^{(k)} \indepP Y_i^{(k)} \mid T_i^{(k)}, X_i^{(k)}, X_j^{(k)}$ with dependent mechanisms despite that $X$ is a valid adjustment set for estimating $P(Y\mid do(T))$. Open paths between $T_j^{(k)}$ and $Y_i^{(k)}$ after conditioning on $T_i, X_i$ and $X_j$ are marked in red.}
    \label{fig:dependent2}
\end{figure}

\subsection{Degenerate Mechanisms}

What happens if one or more of the distributions $P(\mT)$, $P(\mY)$, $P(\mX)$, and $P(\mU)$ are constant across all environments? We investigate these scenarios by first adding $\mT$, $\mY$, $\mX$ and/or $\mU$ to the conditioning set in $T_j^{(k)} \indepP Y_i^{(k)} \mid T_i^{(k)}, X_i^{(k)}, X_j^{(k)}$ - the testable implication for confounding. We then follow the same procedure as for proving Theorem~\ref{thm:testable_implication} and noting whether this conditional independence still discriminates between cases when $U$ is and is not a confounder. Out of 15 possible cases with degenerate mechanisms, we identify that the theorem fails every time we condition on
both $\mT$ and $\mU$, as demonstrated in Table~\ref{tab:proof_1_degen}. 

\subsection{Faithfulness Violation}
\label{app:faith}

\begin{figure}[!h]
    \centering
    \includegraphics[width=0.5\textwidth]{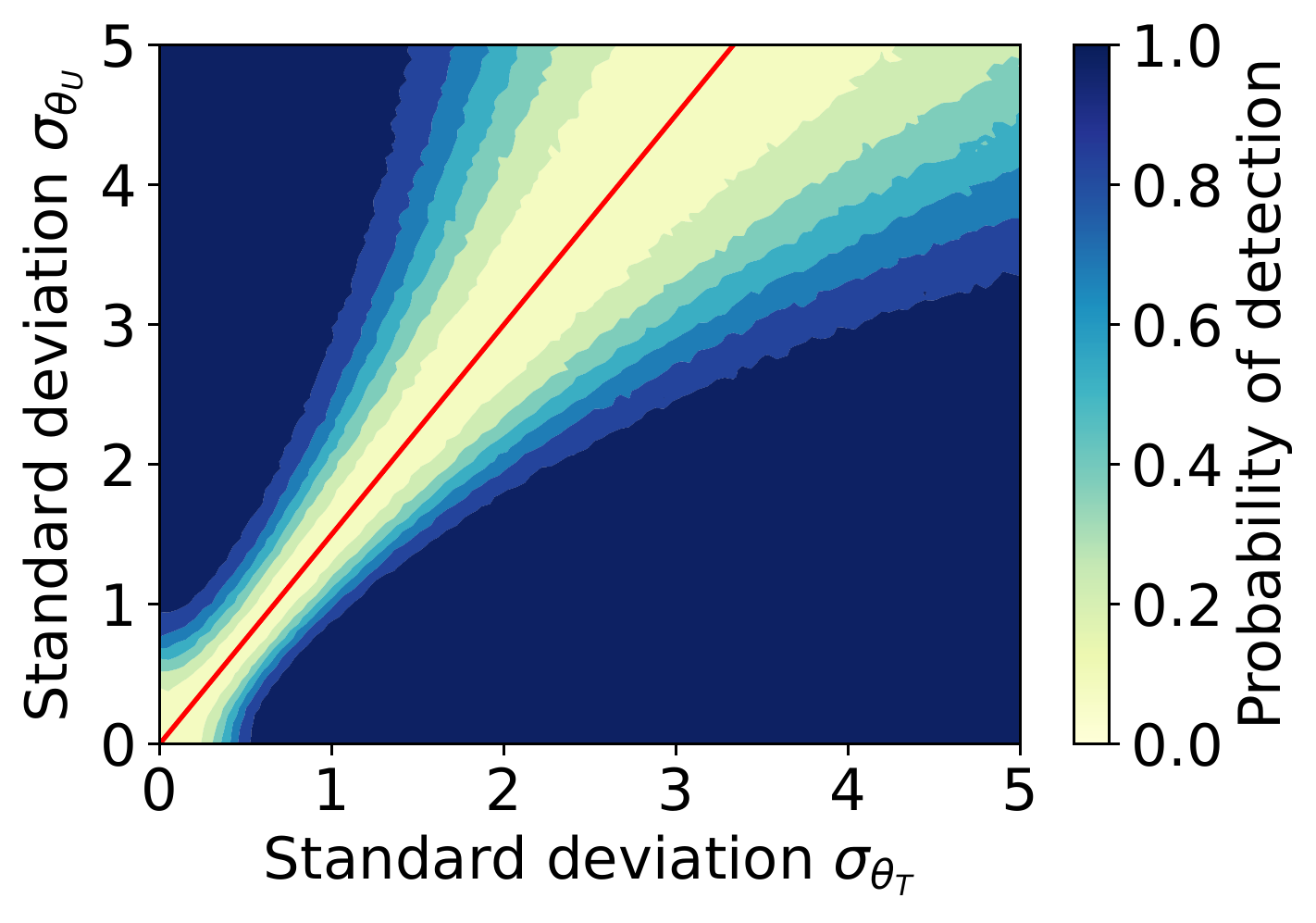}
    \caption{Faithfulness violation from Example~\ref{ex:violation}. Varying $\sigma_{\mT}$ and $\sigma_{\mU}$ with $\sigma_T=\frac{2}{3}$ and $\sigma_U=1$, the probability for detecting confounding goes to zero as we get closer to $\sigma_{\mU} = \frac{\sigma_U}{\sigma_T}\sigma_{\mT}$~(red~line).}
    \label{fig:faithfulness_violation}
\end{figure}

What happens if the conditional independencies we test do not correspond to the dependencies in the underlying causal graph? If that is the case, we would fail to detect dependencies implying the presence of confounders. Knowing when a faithfulness violation occurs is not possible, we can only reason about its plausibility.  
In this section, we first showcase an example where a faithfulness violation occurs in linear-Gaussian structural causal models for particular configurations as well as when the confounder effect sizes -- that is $\gamma$ and $\lambda$ in \eqref{eq:faith_elaborate} --  become very large.

It was proved by \citet{meek1995faithfulness} that all graphs with discrete and linear-Gaussian data distributions fulfill faithfulness in a measure-theoretic sense; distributions that violate faithfulness have measure zero. 

However, even if we restrict $T$ and $Y$ to be categorical we might not want to assume that $U$ or the causal mechanisms follow a discrete distribution. The following example also demonstrates the practical issues stemming from faithfulness violations even when the data is jointly Gaussian.
\begin{thmex}
\label{ex:violation}
Let $k=1,\dots, K$ and $i=1,\dots,N_k$. Consider the structural causal model
\begin{equation}
\label{eq:faith_elaborate}
\begin{aligned}
    &U_i^{(k)} = \mU^{(k)} + \varepsilon_{U,i},\;\;  &\varepsilon_{U,i} \sim \normal(0, \sigma_U^2),\;\;\\
    &T_i^{(k)}  = \gamma U_i^{(k)}  + \mT^{(k)}  + \varepsilon_{T,i},\;\;   &\varepsilon_{T,i} \sim \normal(0, \sigma_T^2),\;\;  \\
    &Y_i^{(k)}  = \lambda U_i^{(k)}  + \beta T_i^{(k)}  + \mY^{(k)}  + \varepsilon_{Y,i},\;\;  &\varepsilon_{Y,i} \sim \normal(0, \sigma_Y^2),\;\;
\end{aligned}
\end{equation}
where $\Theta_{V}^{(k)}\sim \normal(0,\sigma_{\Theta_{V}}^2)$ and $\varepsilon_{V}\indepP\Theta_{V}^{(k)}$ for $V\in \{T,Y,U\}$. Further, subscript $i$ for the noise variables indicates that they are sampled independently for each observation $i$. 
Then, despite the presence of confounding, $T_j^{(k)}\indepP Y_i^{(k)}\mid T_i^{(k)}$ for any $i\neq j$ when $\sigma_{\mU} = \frac{\sigma_U}{\sigma_{T}}\sigma_{\mT}$. In the finite-sample setting, it noticeably influences our ability to detect confounding when the distribution parameters come close to this equality, as illustrated in Figure~\ref{fig:faithfulness_violation}. 
\end{thmex} 

To create Figure~\ref{fig:faithfulness_violation}, we generate data according to \eqref{eq:faith_elaborate} with the following parameters fixed: $\beta=\gamma=\lambda=1$, $\sigma_{\mY} = 1$, $\sigma_{Y} = \sigma_U= 1$ and $\sigma_T=\frac{2}{3}$. Meanwhile, we vary $\sigma_{\mT}$ and $\sigma_{\mU}$ between 0 and 5. We test $T_j^{(k)}\indepP Y_i^{(k)} \mid T_i^{(k)}$ using the partial correlation~\citep{baba2004partial}. The experiment is repeated 1000 times with 1000 environments and a significance level $\alpha=0.05$.

\subsection{Derivation of Example~\ref{ex:violation}}
We consider the structural causal model in~\eqref{eq:faith_elaborate}. Now, we want to prove that  $T_j^{(k)}\indepP Y_i^{(k)}\mid T_i^{(k)}$ for any $i\neq j$ when $\sigma_{\mU} = \frac{\sigma_U}{\sigma_T}\sigma_{\mT}$. For ease of notation, we will drop the superscript $(k)$, as the results hold for any $k$.

Crucially, we note that the partial correlation
\begin{equation}
\label{eq:partial}
\rho_{T_j, Y_i \cdot T_i}=\frac{\rho_{T_j, Y_i}-\rho_{T_j, T_i} \rho_{T_i, Y_i}}{\sqrt{1-\rho_{T_j, T_i}^{2}} \sqrt{1-\rho_{T_i, Y_i}^{2}}}~,
\end{equation}
is zero if and only if $T_j \indepP Y_i \mid T_i$ when the data is jointly Gaussian~\citep{baba2004partial}, which is the case for~\eqref{eq:faith_elaborate} because $p(T_i,Y_i,U_i)=P(Y_i\mid T_i,U_i)P(T_i\mid U_i)P(U_i)$ where each factor is a Gaussian~density.

To check when the partial correlation is zero, we need to find out when 
\begin{equation*}
    \rho_{T_j, Y_i} - \rho_{T_j, T_i} \rho_{T_i, Y_i} = 0~.
\end{equation*}
Since $\rho_{X,Y}=\frac{\cov (X,Y)}{\sqrt{\V(X)\V(Y)}}$ for some random variables $X$ and $Y$, we can write this as 

\begin{align}
\rho_{T_j, Y_i} - \rho_{T_j, T_i} \rho_{T_i, Y_i} &= \frac{\cov(T_j, Y_i)}{\sqrt{\V(T_j)\V(Y_i)}} - \frac{\cov(T_j, T_i)}{\sqrt{\V(T_j)\V(T_i)}} \frac{\cov(T_i, Y_i)}{\sqrt{\V(T_i)\V(Y_i)}}  \\[0.2cm]
\label{eq:num_den}
&= \frac{\V(T_i)\cov(T_j, Y_i) - \cov(T_j, T_i)\cov(T_i, Y_i)}{\sqrt{\V(T_i)^3\V(Y_i)}}~,
\end{align}
where we used the fact that $\V(T_i)=\V(T_j)$ for any samples $i$ and $j$.

First, we need to determine all the (co)variances, for which we need to know
\begin{align*}
    T_i &= \gamma \mU + \gamma \varepsilon_{U,i} + \mT + \varepsilon_{T,i} \\
    Y_i &= \lambda \mU + \lambda \varepsilon_{U,i} + \beta\gamma \mU + \beta\gamma \varepsilon_{U,i} + \beta\mT + \beta\varepsilon_{T,i} + \mY + \varepsilon_{Y,i} \\ 
\end{align*}
Note that $\E[T_i]=\E[Y_i]=0$. Consequently, we can write out the covariances, for any $i,j$, as follows:

\begin{align*}
    \cov(T_j, Y_i) &= \E[T_j Y_i] = (\gamma\lambda +\beta\gamma^2) \E[\mU^2] + \beta\E[\mT^2]  \\
    \cov(T_j, T_i) &= \E[T_j T_i] = \gamma^2 \E[\mU^2] + \E[\mT^2] \\
    \cov(T_i, Y_i) &= \E[T_i Y_i] = (\gamma\lambda + \beta\gamma^2)\E[\mU^2] + (\gamma\lambda + \beta\gamma^2)\E[\varepsilon_{U}^2] + \beta\E[\mT^2] + \beta\E[\varepsilon_T^2] \\
    \V(T_i) &= \E[T_i T_i] = \gamma^2 \E[\mU^2] + \gamma^2 \E[\varepsilon_U^2] + \E[\mT^2] + \E[\varepsilon_T^2]\\
    \V(Y_i) &= \E[Y_i Y_i] = 2(\lambda^2 + \beta^2\gamma^2)\E[\mU^2] + 2(\lambda^2 + \beta^2\gamma^2)\E[\varepsilon_U^2] + \beta^2\E[\mT^2] + \beta^2\E[\varepsilon_T^2] + \E[\mY^2] + \E[\varepsilon_Y^2]
\end{align*}

Now, we look at the numerator in \eqref{eq:num_den} and we want to know when it could be zero since that makes the partial correlation zero:
\begin{align*}
     0 & = \V(T_i)\cov(T_j, Y_i) - \cov(T_j, T_i)\cov(T_i, Y_i)  \\
       & =  \gamma\lambda (\E[\mU^2] \E[\varepsilon_T^2] - \E[\varepsilon_U^2]\E[\mT^2])
\end{align*}
The solution is given by
\begin{equation}
    \sigma_{\mU} = \frac{\sigma_U}{\sigma_T}\sigma_{\mT}~,
\end{equation}
where the square root of the second moments are equal to the standard deviations. This is the same equality as demonstrated in the example.

\subsection{Asymptotic Behavior of Partial Correlation}

We also look at the partial correlation and ask what happens when the confounder effect sizes $\gamma$ or $\lambda$  become very large. 
The numerator in \eqref{eq:num_den} grows linearly with respect to both $\gamma$ and $\lambda$, and the other variances can be rewritten as
\begin{align*}
    \cov(T_j, T_i) &= \gamma^2\E[\mU^2] + O(1)\\
    \cov(T_i, Y_i) &= (\gamma\lambda + \beta\gamma^2)(\E[\mU^2] + \E[\varepsilon_U^2]) + O(1) \\
    \V(T_i) &= \gamma^2(\E[\mU^2] + \E[\varepsilon_U^2]) + O(1) \\
    \V(Y_i) &= 2(\lambda^2 + \beta^2\gamma^2)(\E[\mU^2] + \E[\varepsilon_U^2]) + O(1)
\end{align*}
where $O(1)$ is a constant with respect to $\gamma$ and $\lambda$.

We rewrite the partial correlation~\eqref{eq:partial} as
\begin{align*}
\rho_{T_j, Y_i \cdot T_i} &= \frac{\left(\V(T_i)\cov(T_j, Y_i) - \cov(T_j, T_i)\cov(T_i, Y_i)\right)/\sqrt{\V(T_i)^3\V(Y_i)}}{\sqrt{1-\frac{\cov(T_j,T_i)^2}{\V(T_i)^2}}\sqrt{1-\frac{\cov(T_i,Y_i)^2}{\V(T_i)\V(Y_i)}}}~.
\end{align*}
Assuming that all second moments are non-zero and finite, it is possible to show that
\begin{equation*}
    \rho_{T_j, Y_i \cdot T_i} \propto 
    \begin{cases} 
            \gamma^{-3} & \;\; \text{for} \;\; |\gamma|>>1,\\
            1 & \;\; \text{for} \;\; |\lambda|>>1
    \end{cases}~.
\end{equation*}
Hence, when either $|\gamma|$ or $|\lambda|$ goes to infinity we have
\begin{align*}
    &\rho_{T_j, Y_i \cdot T_i} \underset{|\gamma| \rightarrow \infty}{\longrightarrow} 0 \\[0.2cm]
    &\rho_{T_j, Y_i \cdot T_i} \underset{|\lambda| \rightarrow \infty}{\longrightarrow} C \;\; \text{for some} \;\; C\in[-1,1]
\end{align*}
Note that $C$ could be zero, for instance, when $
\sigma_{\mU} = \frac{\sigma_U}{\sigma_T}\sigma_{\mT}
$, although we demonstrate with simulation studies in Appendix~\ref{app:experiments} a case where $C$ is non-zero as well.

Interestingly, in this case, the bias from estimating the causal effect without adjusting for the confounder $U$ is 
\begin{equation}
    \E[Y\mid do(T)] - \E[Y\mid T] = \beta T - (\beta T + \frac{\lambda}{\gamma}T) = -\frac{\lambda}{\gamma}T~.
\end{equation}
We note that when $\gamma\rightarrow\infty$ the bias goes to zero, similar to the partial correlation. Meanwhile, the bias increases with $\lambda$ which also is consistent with the asymptotic behavior of the partial correlation as $\lambda\rightarrow\infty$. 

\subsection{Positivity violations in the sampling of mechanisms}
In this section, we discuss another potential issue that can come up in our problem setting, namely that there could be positivity violations in the sampling of the mechanisms. Particularly unique to our setting is that the support for $(\mT, \mY,\mX,\mU)$ must be the same for different environments, if not then this would be a direct violation of Assumption~\ref{asmp:hetero_ICM}.

We start by pointing out that there exist two categories of positivity violations: structural violations and random violations~\citep{hernan2023causal}. Structural violations occur for instance when a certain range of values of a variable never will be observed, and they may restrict the population for which we can draw causal conclusions. Meanwhile, random violations are due to having a finite number of samples. Random violations are perhaps also less problematic, as they can go away as we collect more data.

In our setting with multi-environment data, we can make the same analogy with structural and random positivity violations. For instance, a structural violation could occur if we have only collected data from multiple hospitals in country A but there is also another country B such that $P(\mT, \mY,\mX,\mU \mid \text{country A})$ does not overlap with $P(\mT, \mY,\mX,\mU \mid \text{country B})$. Then, we have a structural violation between environments in countries A and B. Meanwhile, a random positivity violation could come from not observing enough environments in country A, assuming that we are only interested in studying data from that country. 

So based on what we have discussed so far, one could ask how to reason about positivity violations when trying to detect hidden confounding. In this case, we think it is important to first ask ourselves: In what population are we trying to detect hidden confounding? If the answer is the population in country A, then we should not include data from country B, as there is a structural positivity violation; and vice versa. For random violations, it is harder to anticipate what problems can come up, but these issues are mainly avoided by ensuring that we have sampled enough environments. This is what, in the end, will affect the quality of the conditional independence test in our method.

\subsection{Selection bias}
\begin{wrapfigure}[13]{r}{0.29\textwidth}
    \begin{tikzpicture}

    \node[state] (t) at (0,0) {$T$};

    \node[latent] (u) [above right =of t ] {$U$};
    
    \node[state] (y) [below right =of u] {$Y$};

    \node[state] (c) at (1.57,-1) {$C$};
    
    \path (t) edge (y);
    \path (u) edge (y);
    \path (u) edge (t);
    \path (t) edge (c);
    \path (y) edge (c);

\end{tikzpicture}
    \caption{Grey variables are unobserved.}
    \label{fig:selection_bias}
\end{wrapfigure}
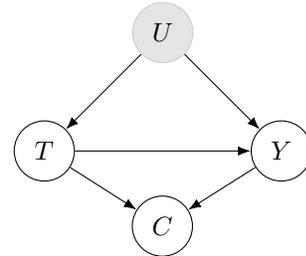
In our theory, we assume there is no selection bias. That is, for instance, that there are no unobserved colliders that we have conditioned on in our causal DAG. In principle, we can study such scenarios by adding colliders in our original problem setup to introduce different types of selection mechanisms. While we leave this topic for future work, we present here a quick illustration of how selection bias can (or can not) hurt our procedure.

\begin{thmex}
    Consider the graph $\cG$ in Figure~\ref{fig:selection_bias} where $C$ is a collider between treatment $T$ and outcome $Y$. Now we consider the corresponding augmented DAG $\hcG$, add the causal mechanism $\Theta_{C}\sim P(\Theta_C)$ as parent to $C$, and check how (unknowingly) conditioning on $C$ influences our ability to detect the presence of the unobserved $U$. The fact that $\Theta_{C}$ is a random variable would reflect that we have different selection mechanisms in different environments. We note in this case that the conditional independence $T_j^{(k)} \not\indepP Y_i^{(k)} \mid T_i^{(k)}, C_i^{(k)}, C_j^{(k)}$ will be violated regardless of whether $U$ is present or not. In other words, selection bias would in this case lead to false positives of our procedure. This is because there will always be an open path between $T_j^{(k)}$ and $Y_i^{(k)}$ through $\Theta_C$ in $\hcG$. But if, on the other hand, the selection mechanism remains fixed across all environments (meaning $\Theta_C$ is constant), this would close that path. That means we do not have this problem of false detection anymore, and the proposed approach would still work. 
\end{thmex}

\section{Generation of Twins Semi-synthetic Dataset}
\label{app:twins}
We use data from twin births in the USA between 1989-1991~\citep{almond2005costs,louizos2017causal} to construct an observational dataset with a known causal structure. The dataset contains 46 covariates related to pregnancy, birth, and parents, out of which we select ten as potential confounders in our experiments. Many covariates are highly imbalanced and have low variance, some even on the border of being constant. Due to this, we wish to exclude those in data generation. The selection is performed by first excluding any binary variables from the list of covariates, and then further removing the remaining ones that have an empirical variance smaller than one.

In the end, the following covariates are used from the Twins dataset for our data generation: birth month, father's age, mom's age, mom's education, mom's place of birth, number of prenatal visits, number of live births before twins, the total number of births before twins, period of gestation, state of birth occurrence, and state of registered residence. Note that all of these are reported as categorical/discrete variables in the dataset.

Among the covariates, we use the state in which the birth took place as environment label, hence obtaining 51 different environments. We simulate a continuous treatment and outcome by randomly selecting $p\in [1,\dots,10]$ features $(X_1,X_2,\dots,X_p)$ and a set of functions to generate the data as follows: 
\begin{equation}
    \label{eq:twins}
    \begin{aligned}
        T = \sum_{d=1}^p \alpha_d f_d(X_{d,\text{scaled}}) + \varepsilon_T \;\;\text{and}\;\; Y = \sum_{d=1}^p \beta_d g_d(X_{d,\text{scaled}}) + \delta T + \varepsilon_Y
    \end{aligned}
\end{equation}
For each $d$, $\alpha_d$ is sampled from a uniform distribution $\unif(1,5)$, $\beta_d$ is sampled from a uniform distribution $\unif(1,5)$, and $f_d$ and $g_d$ are sampled from the set of functions $\{\tanh(x), x, x^2\}$ with equal probability. The treatment effect, $\delta$, is also randomly sampled from a uniform distribution $\unif(1,2)$, and we have noise variables $\varepsilon_T\sim\normal(0,1/4)$ and $\varepsilon_Y\sim\normal(0,1/4)$. The features from the Twins dataset are also scaled as 
\begin{equation}
    X_{d,\text{scaled}} = 5* (X_d-\text{mean}(X_d))/(\max(X_d)-\min(X_d))~,
\end{equation}
where mean/max/min are taken over all observed values of the covariate $X_d$. Note that the functions are fixed across all environments in this setup, and variations between environments only stem from the real-world distribution shifts of the covariates between birth states.

\section{Additional Experiments}
\label{app:experiments}
We present additional simulation studies, mainly replicating the experiments on synthetic data from Section~\ref{sec:synthetic} with continuous data. In addition, we further investigate the asymptotic behavior of the partial correlation from Appendix~\ref{app:faith} with both the binary and continuous data. 

The continuous data is generated from a linear-Gaussian DAG as described in~\eqref{eq:faith_elaborate}. Unless otherwise stated, we use $\beta=1$, $\sigma_T=\sigma_U=\sigma_Y=1$, $\sigma_{\mT}=\sigma_{\mY}=1$ and $\sigma_{\mT}=5$. To vary the influence of the hidden confounder, we can adjust either $\gamma$ or $\lambda$.
We test $T_j^{(k)}\indepP Y_i^{(k)} \mid T_i^{(k)}$ using the partial correlation~\citep{baba2004partial} with $N_k=2$ samples per environment.

For the first experiment, we vary the number of environments and the confounder influence by setting $\gamma=\lambda$. In the main part of the paper, we only considered what happens when varying $\lambda$ with $\gamma=1$. Similarly, we do the same experiment with the binary data for comparison. The results are seen in Figure~\ref{fig:appendix_vary_both}. Notably, the probability of detecting hidden confounding starts decreasing when $\gamma=\lambda$ goes above a certain threshold. This is consistent with our previous conclusions from Appendix~\ref{app:faith}, where we noted that partial correlation is proportional to $\gamma^{-3}$ for $\gamma>>1$ while remaining constant for $\lambda>>1$. Hence, we would expect the partial correlation to shrink as both $\gamma$ and $\lambda$ grow. Notably, the effect is more pronounced with the continuous data although it can also be seen with the binary data.

Secondly, we perform the experiment with continuous data where we only vary $\lambda$ while fixing $\gamma=1$. The results are shown in Figure~\ref{fig:appendix_vary_lambda}, we note that the probability of detection no longer decreases as the confounder effect size increases. The results with binary data are also shown again for comparison. Once again, this is predicted by the asymptotic behavior of the partial correlation.

Finally, we compare our statistical testing procedure to testing $Y\indepP E\mid T$ with continuous data in Figure~\ref{fig:comparison_appendix}. We run the experiment with 10000 environments, 100 samples per environment, and $\sigma_{\mU}=10$ to avoid the issues of faithfulness violations which we have discussed before. Similar to the case with binary data, in Figure~\ref{fig:comparison_no_conf_appendix}, we see that the probability of false detection when testing $Y\indepP E\mid T$ grows as the standard deviation of $\mY$ increases. Meanwhile, we do not observe this problem in our testing procedure. The case when confounding is present is shown in Figure~\ref{fig:comparison_conf_appendix} to confirm that our method is able to detect confounding. For completion, we also include a case where $\lambda=10$ (confounding is present) for the binary data experiment presented in the main part of the paper, this is shown in Figure~\ref{fig:comparison_conf_appendix_binary}.

\begin{figure}[h]
    \centering
    \begin{subfigure}{0.45\textwidth}
        \includegraphics[width=\textwidth]{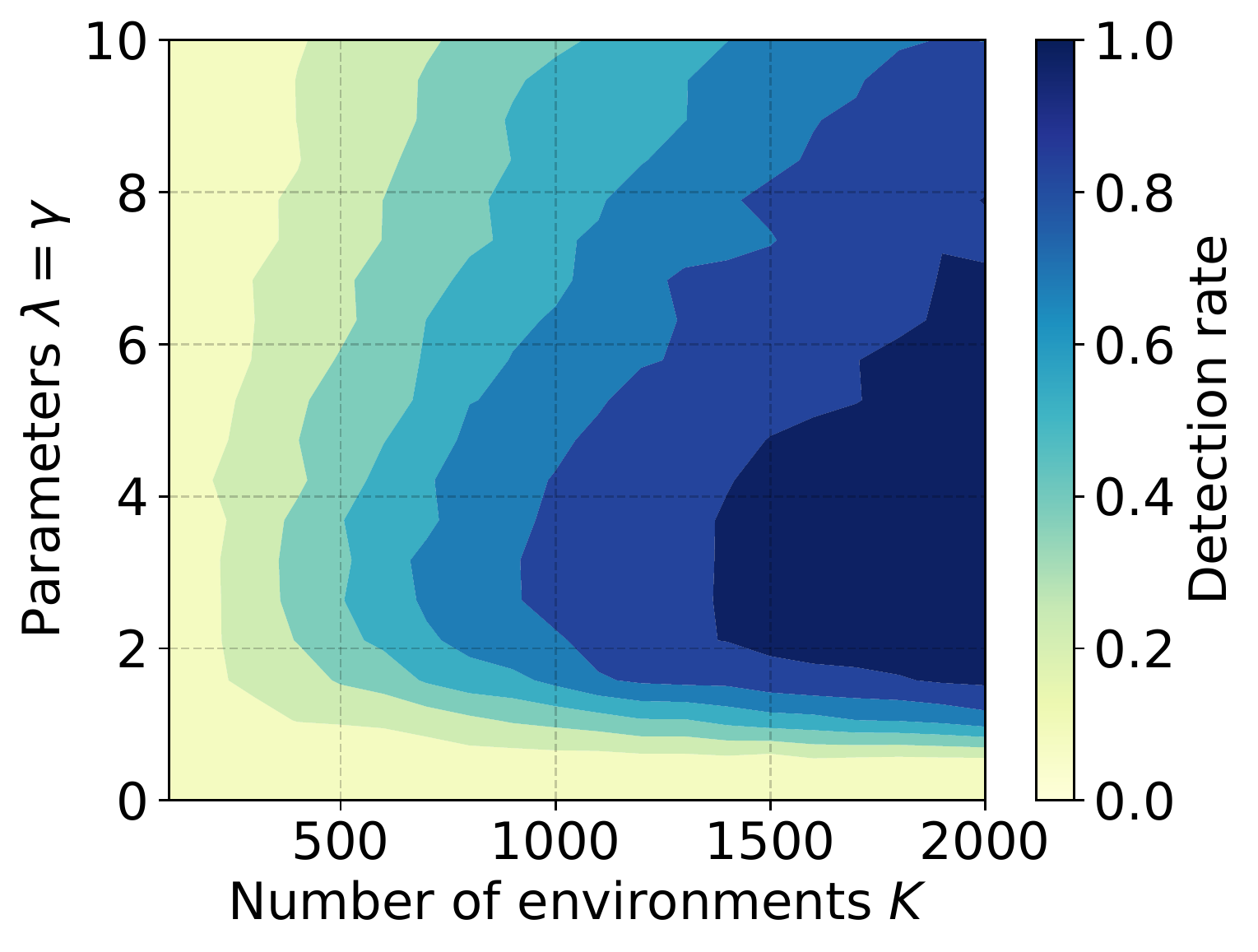}
        \caption{Binary data}
        \label{fig:appendix_vary_both_binary}
    \end{subfigure}
    ~
    \begin{subfigure}{0.45\textwidth}
        \includegraphics[width=\textwidth]{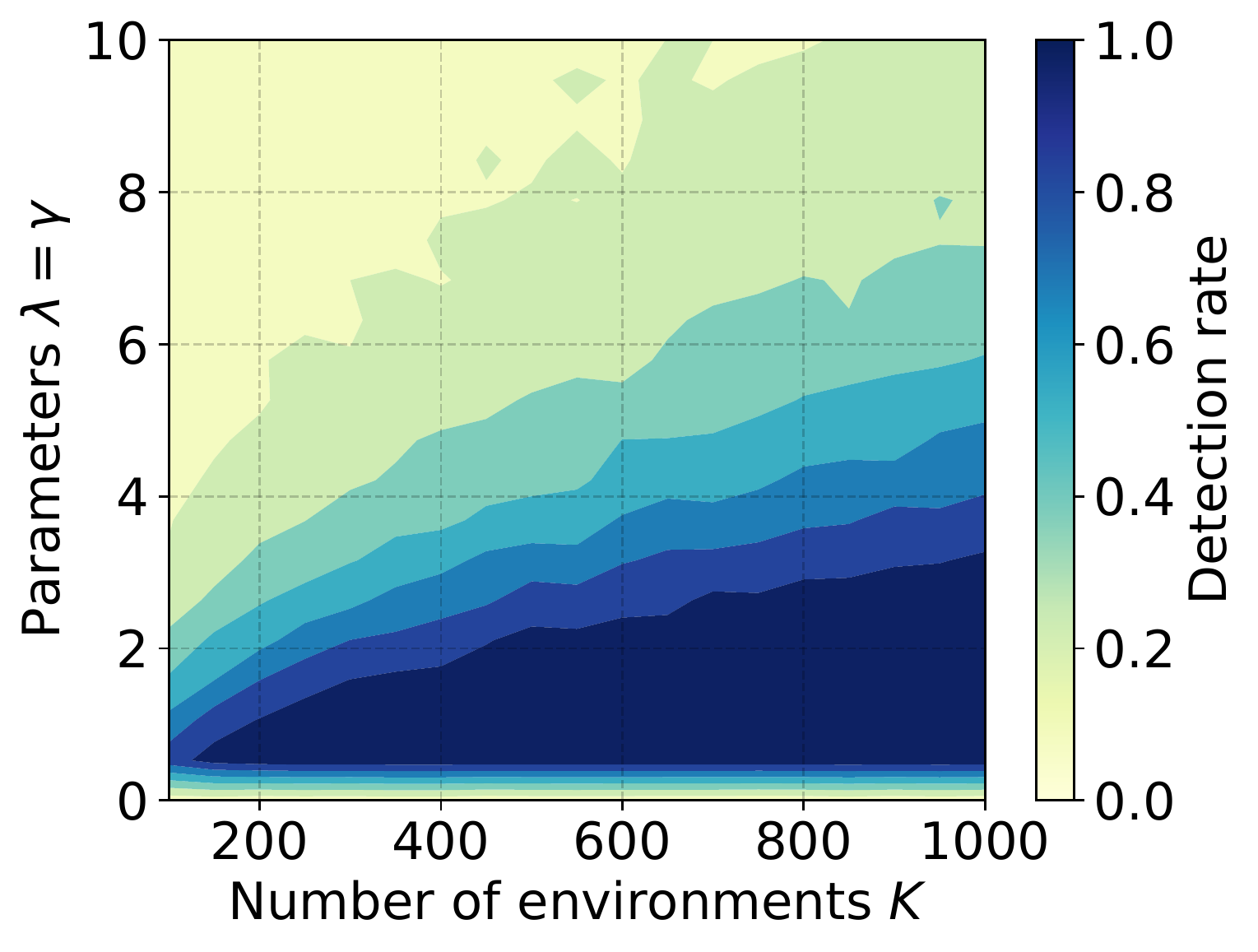}
        \caption{Continuous data}
        \label{fig:appendix_vary_both_cont}
    \end{subfigure}    
    \caption{Probability of detecting hidden confounding for varying the number of environments $K$ and confounder effect sizes where $\lambda=\gamma$ are varied jointly. 500 repetitions.}
    \label{fig:appendix_vary_both}
\end{figure}

\begin{figure}[h]
    \centering
    \begin{subfigure}{0.45\textwidth}
        \includegraphics[width=\textwidth]{figures/heatmap/vary_conf_11272134.pdf}
        \caption{Binary data}
        \label{fig:appendix_vary_lambda_binary}
    \end{subfigure}
    ~
    \begin{subfigure}{0.45\textwidth}
        \includegraphics[width=\textwidth]{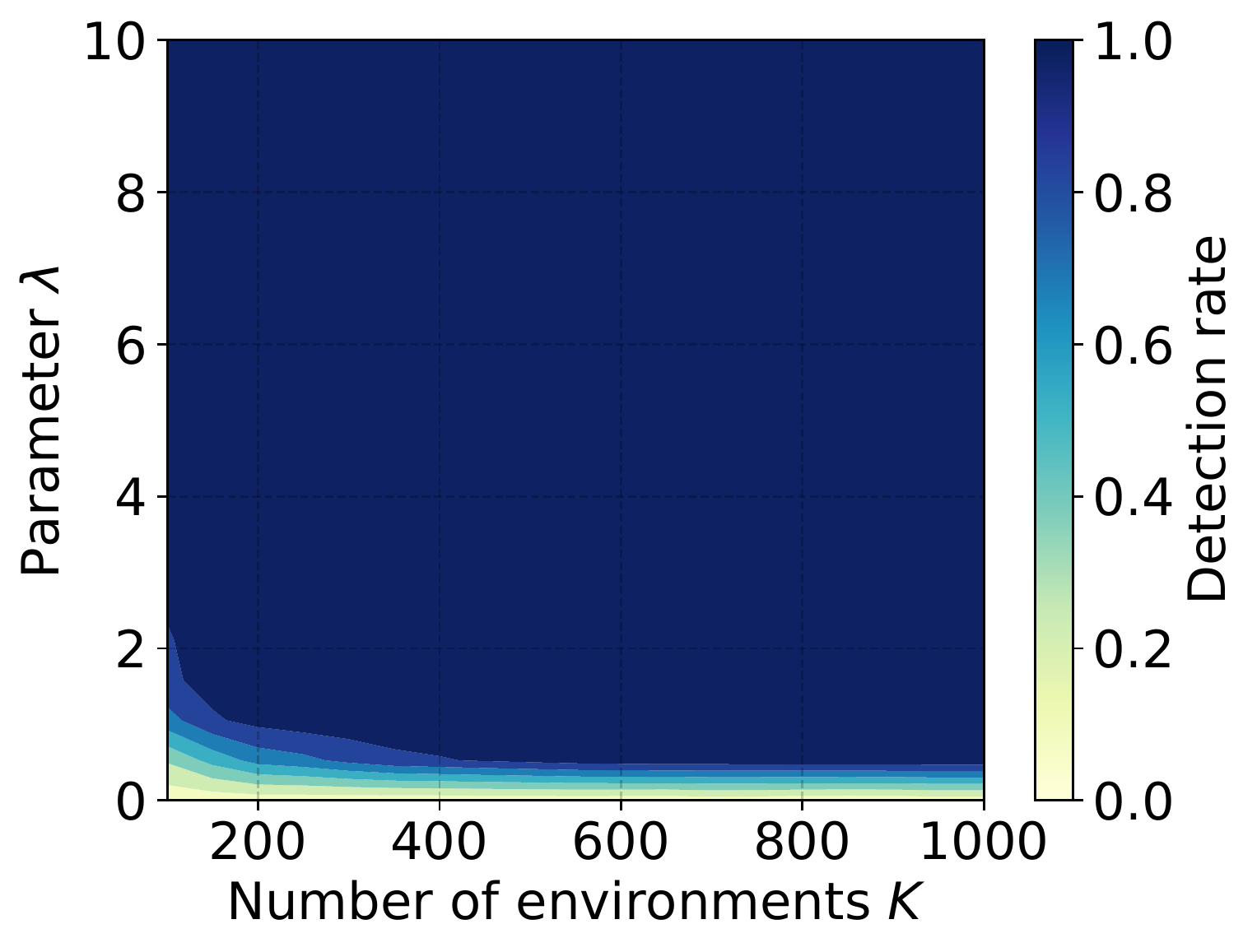}
        \caption{Continuous data}
        \label{fig:appendix_vary_lambda_cont}
    \end{subfigure}    
    \caption{Probability of detecting hidden confounding for varying the number of environments $K$ and confounder effect sizes where $\lambda$ is varied while $\gamma=1$ is fixed.  500 repetitions}
    \label{fig:appendix_vary_lambda}
\end{figure}

\begin{figure}[h]
    \centering
    \begin{subfigure}{0.45\textwidth}
        \includegraphics[width=\textwidth]{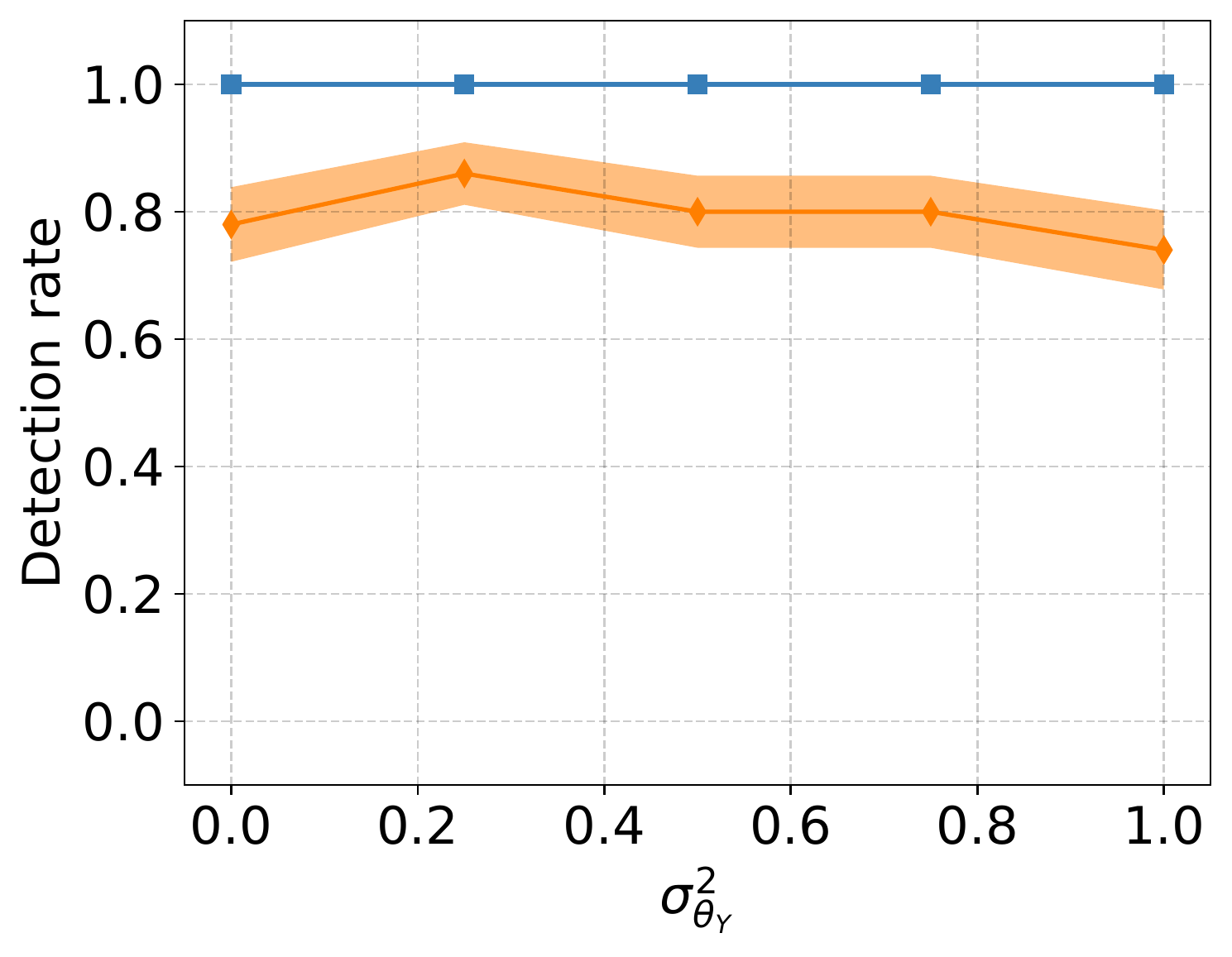}
        \caption{Confounding is present ($\lambda=10$)}
        \label{fig:comparison_conf_appendix}
    \end{subfigure}
    ~
    \begin{subfigure}{0.45\textwidth}
        \includegraphics[width=\textwidth]{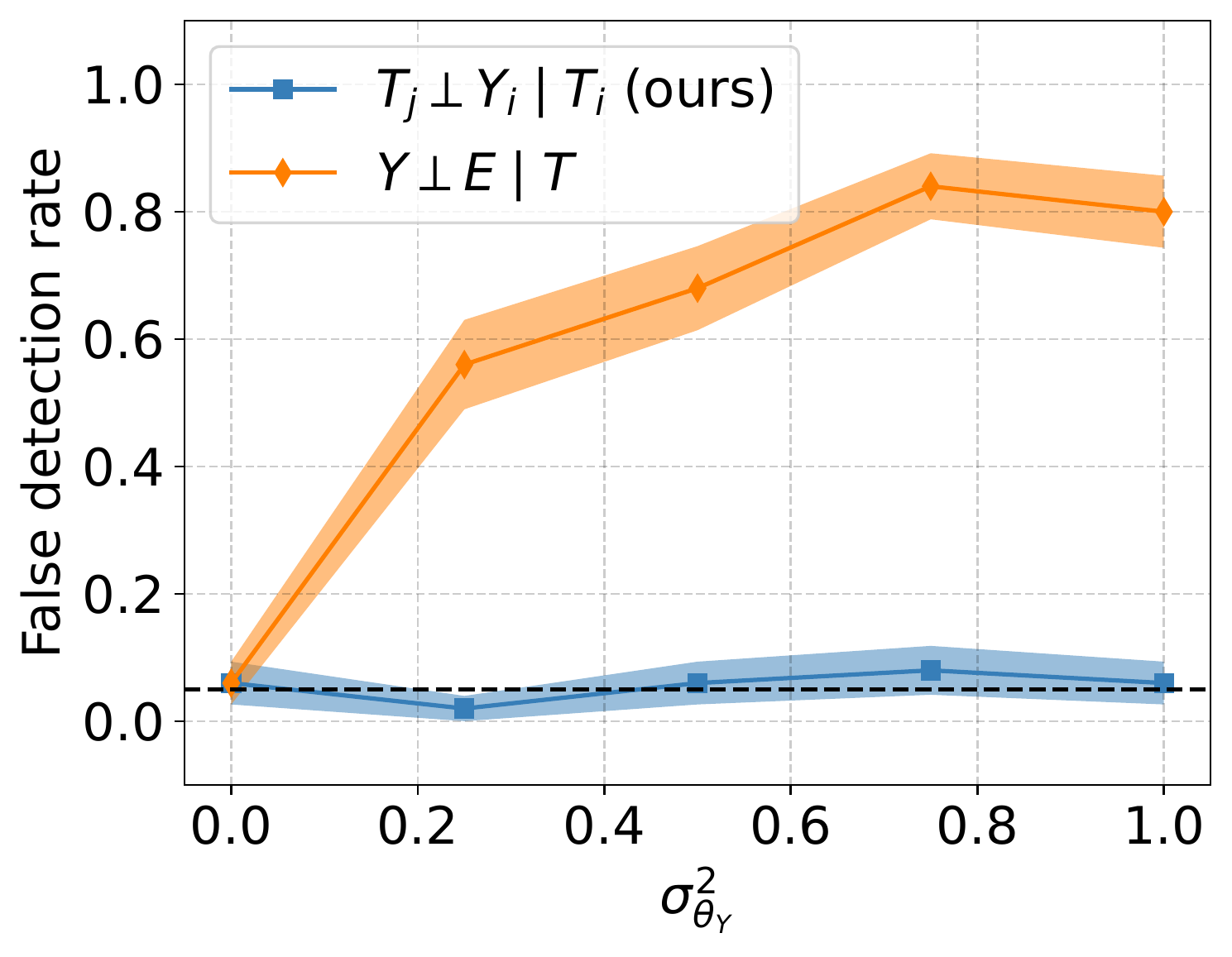}
        \caption{No confounding is present  ($\lambda=0$)}
        \label{fig:comparison_no_conf_appendix}
    \end{subfigure}    
    \caption{Comparison on continuous linear-Gaussian data between the proposed procedure and an alternative testing procedure by varying the standard deviation of $\mY$ in both the presence and absence of confounding. The black dashed line corresponds to the desired type 1 error control $\alpha=0.05$. The shaded area shows the standard error from 50 repetitions.}
    \label{fig:comparison_appendix}
\end{figure}

\begin{figure}[h]
    \centering
    \begin{subfigure}{0.45\textwidth}
        \includegraphics[width=\textwidth]{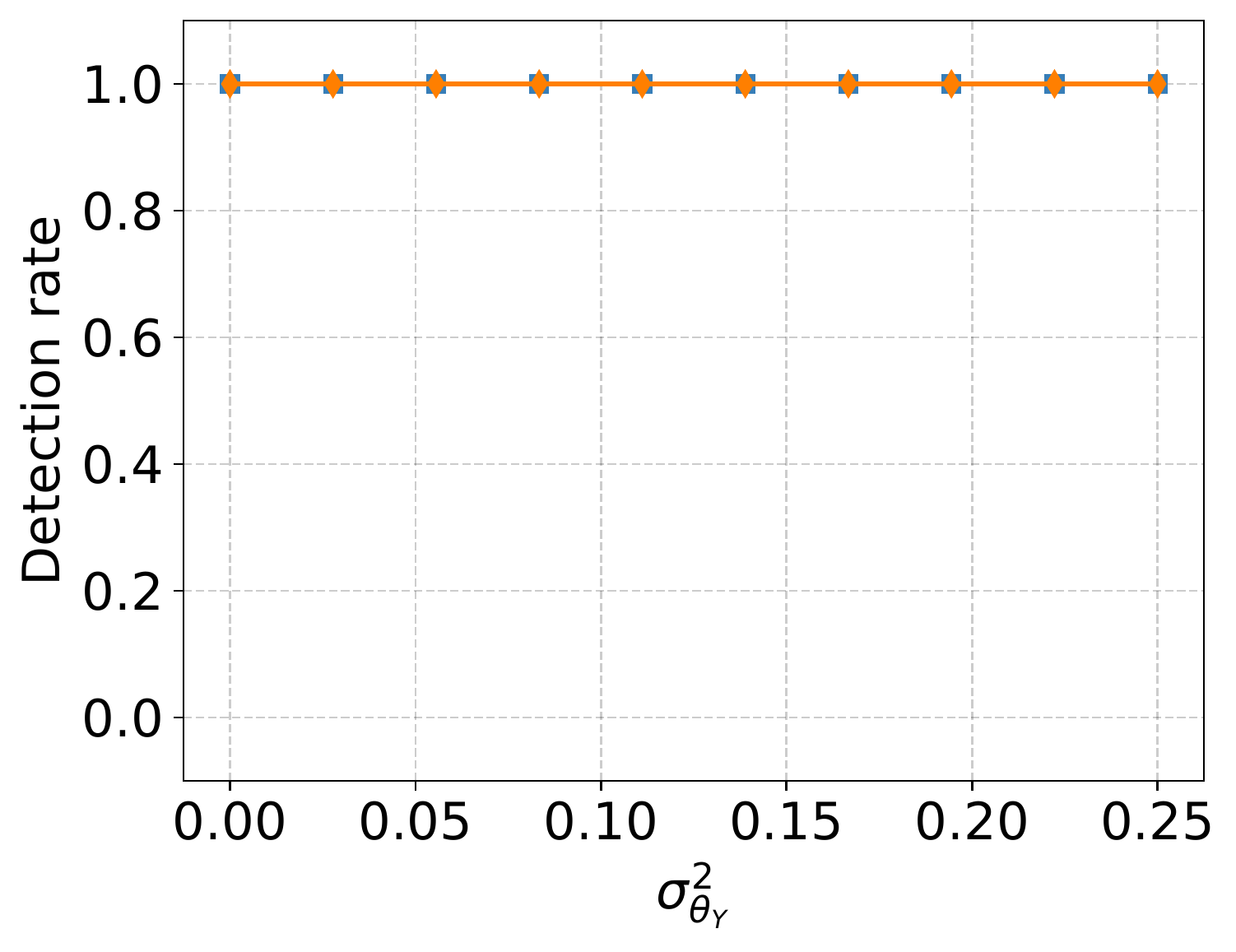}
        \caption{Confounding is present ($\lambda=10$)}
        \label{fig:comparison_conf_appendix_binary}
    \end{subfigure}
    ~
    \begin{subfigure}{0.45\textwidth}
        \includegraphics[width=\textwidth]{figures/error_curves/comparison_cs0_11151106.pdf}
        \caption{No confounding is present  ($\lambda=0$)}
        \label{fig:comparison_no_conf_appendix_binary}
    \end{subfigure}    
    \caption{Comparison on binary data between the proposed procedure and an alternative testing procedure by varying the standard deviation of $\mY$ in both the presence and absence of confounding. The black dashed line corresponds to the desired type 1 error control $\alpha=0.05$. The shaded area shows the standard error from 50 repetitions.}
    \label{fig:comparison_appendix}
\end{figure}

\begin{table}[h]
\caption{Conditional d-separations in combinations of DAGs with variables $(T, Y, X, U)$. For the d-separation, (\cmark) indicates that it holds and (\xmark) otherwise. The shaded rows are the cases where $U$ is a confounder to $T$ and $Y$ in $\cG$.}
\label{tab:proof_1}
\centering
\begin{tabular}[t]{rllllllll}
\toprule
id & \rot{$X - T$} & \rot{$X - Y$} & \rot{$T - Y$} & \rot{$U - T$} & \rot{$U - Y$} & \rot{$U - X$} & \rot{\small $T_j^{(k)} \perp_d Y_i^{(k)} \mid T_i^{(k)}, X_i^{(k)}, X_j^{(k)}$} & \rot{$U$ is confounder}\\
\midrule
\cellcolor{lightgray}{1} & \cellcolor{lightgray}{$\rightarrow$} & \cellcolor{lightgray}{$\rightarrow$} & \cellcolor{lightgray}{$\rightarrow$} & \cellcolor{lightgray}{$\rightarrow$} & \cellcolor{lightgray}{$\rightarrow$} & \cellcolor{lightgray}{$\rightarrow$} & \cellcolor{lightgray}{\xmark} & \cellcolor{lightgray}{\cmark}\\
\cellcolor{lightgray}{2} & \cellcolor{lightgray}{$\rightarrow$} & \cellcolor{lightgray}{$\rightarrow$} & \cellcolor{lightgray}{$\rightarrow$} & \cellcolor{lightgray}{$\rightarrow$} & \cellcolor{lightgray}{$\rightarrow$} & \cellcolor{lightgray}{$\leftarrow$} & \cellcolor{lightgray}{\xmark} & \cellcolor{lightgray}{\cmark}\\
\cellcolor{lightgray}{3} & \cellcolor{lightgray}{$\rightarrow$} & \cellcolor{lightgray}{$\rightarrow$} & \cellcolor{lightgray}{$\rightarrow$} & \cellcolor{lightgray}{$\rightarrow$} & \cellcolor{lightgray}{$\rightarrow$} & \cellcolor{lightgray}{} & \cellcolor{lightgray}{\xmark} & \cellcolor{lightgray}{\cmark}\\
4 & $\rightarrow$ & $\rightarrow$ & $\rightarrow$ & $\rightarrow$ &  & $\rightarrow$ & \cmark & \xmark\\
5 & $\rightarrow$ & $\rightarrow$ & $\rightarrow$ & $\rightarrow$ &  & $\leftarrow$ & \cmark & \xmark\\
\addlinespace
6 & $\rightarrow$ & $\rightarrow$ & $\rightarrow$ & $\rightarrow$ &  &  & \cmark & \xmark\\
7 & $\rightarrow$ & $\rightarrow$ & $\rightarrow$ & $\leftarrow$ & $\rightarrow$ & $\leftarrow$ & \cmark & \xmark\\
8 & $\rightarrow$ & $\rightarrow$ & $\rightarrow$ & $\leftarrow$ & $\rightarrow$ &  & \cmark & \xmark\\
9 & $\rightarrow$ & $\rightarrow$ & $\rightarrow$ & $\leftarrow$ & $\leftarrow$ & $\leftarrow$ & \cmark & \xmark\\
10 & $\rightarrow$ & $\rightarrow$ & $\rightarrow$ & $\leftarrow$ & $\leftarrow$ &  & \cmark & \xmark\\
\addlinespace
11 & $\rightarrow$ & $\rightarrow$ & $\rightarrow$ & $\leftarrow$ &  & $\leftarrow$ & \cmark & \xmark\\
12 & $\rightarrow$ & $\rightarrow$ & $\rightarrow$ & $\leftarrow$ &  &  & \cmark & \xmark\\
13 & $\rightarrow$ & $\rightarrow$ & $\rightarrow$ &  & $\rightarrow$ & $\rightarrow$ & \cmark & \xmark\\
14 & $\rightarrow$ & $\rightarrow$ & $\rightarrow$ &  & $\rightarrow$ & $\leftarrow$ & \cmark & \xmark\\
15 & $\rightarrow$ & $\rightarrow$ & $\rightarrow$ &  & $\rightarrow$ &  & \cmark & \xmark\\
\addlinespace
16 & $\rightarrow$ & $\rightarrow$ & $\rightarrow$ &  & $\leftarrow$ & $\leftarrow$ & \cmark & \xmark\\
17 & $\rightarrow$ & $\rightarrow$ & $\rightarrow$ &  & $\leftarrow$ &  & \cmark & \xmark\\
18 & $\rightarrow$ & $\rightarrow$ & $\rightarrow$ &  &  & $\rightarrow$ & \cmark & \xmark\\
19 & $\rightarrow$ & $\rightarrow$ & $\rightarrow$ &  &  & $\leftarrow$ & \cmark & \xmark\\
20 & $\rightarrow$ & $\rightarrow$ & $\rightarrow$ &  &  &  & \cmark & \xmark\\
\addlinespace
\cellcolor{lightgray}{21} & \cellcolor{lightgray}{$\rightarrow$} & \cellcolor{lightgray}{$\rightarrow$} & \cellcolor{lightgray}{} & \cellcolor{lightgray}{$\rightarrow$} & \cellcolor{lightgray}{$\rightarrow$} & \cellcolor{lightgray}{$\rightarrow$} & \cellcolor{lightgray}{\xmark} & \cellcolor{lightgray}{\cmark}\\
\cellcolor{lightgray}{22} & \cellcolor{lightgray}{$\rightarrow$} & \cellcolor{lightgray}{$\rightarrow$} & \cellcolor{lightgray}{} & \cellcolor{lightgray}{$\rightarrow$} & \cellcolor{lightgray}{$\rightarrow$} & \cellcolor{lightgray}{$\leftarrow$} & \cellcolor{lightgray}{\xmark} & \cellcolor{lightgray}{\cmark}\\
\cellcolor{lightgray}{23} & \cellcolor{lightgray}{$\rightarrow$} & \cellcolor{lightgray}{$\rightarrow$} & \cellcolor{lightgray}{} & \cellcolor{lightgray}{$\rightarrow$} & \cellcolor{lightgray}{$\rightarrow$} & \cellcolor{lightgray}{} & \cellcolor{lightgray}{\xmark} & \cellcolor{lightgray}{\cmark}\\
24 & $\rightarrow$ & $\rightarrow$ &  & $\rightarrow$ &  & $\rightarrow$ & \cmark & \xmark\\
25 & $\rightarrow$ & $\rightarrow$ &  & $\rightarrow$ &  & $\leftarrow$ & \cmark & \xmark\\
\addlinespace
26 & $\rightarrow$ & $\rightarrow$ &  & $\rightarrow$ &  &  & \cmark & \xmark\\
27 & $\rightarrow$ & $\rightarrow$ &  & $\leftarrow$ & $\rightarrow$ & $\leftarrow$ & \cmark & \xmark\\
28 & $\rightarrow$ & $\rightarrow$ &  & $\leftarrow$ & $\rightarrow$ &  & \cmark & \xmark\\
29 & $\rightarrow$ & $\rightarrow$ &  & $\leftarrow$ & $\leftarrow$ & $\leftarrow$ & \cmark & \xmark\\
30 & $\rightarrow$ & $\rightarrow$ &  & $\leftarrow$ & $\leftarrow$ &  & \cmark & \xmark\\
\addlinespace
31 & $\rightarrow$ & $\rightarrow$ &  & $\leftarrow$ &  & $\leftarrow$ & \cmark & \xmark\\
32 & $\rightarrow$ & $\rightarrow$ &  & $\leftarrow$ &  &  & \cmark & \xmark\\
33 & $\rightarrow$ & $\rightarrow$ &  &  & $\rightarrow$ & $\rightarrow$ & \cmark & \xmark\\
34 & $\rightarrow$ & $\rightarrow$ &  &  & $\rightarrow$ & $\leftarrow$ & \cmark & \xmark\\
35 & $\rightarrow$ & $\rightarrow$ &  &  & $\rightarrow$ &  & \cmark & \xmark\\
\addlinespace
36 & $\rightarrow$ & $\rightarrow$ &  &  & $\leftarrow$ & $\leftarrow$ & \cmark & \xmark\\
37 & $\rightarrow$ & $\rightarrow$ &  &  & $\leftarrow$ &  & \cmark & \xmark\\
38 & $\rightarrow$ & $\rightarrow$ &  &  &  & $\rightarrow$ & \cmark & \xmark\\
39 & $\rightarrow$ & $\rightarrow$ &  &  &  & $\leftarrow$ & \cmark & \xmark\\
40 & $\rightarrow$ & $\rightarrow$ &  &  &  &  & \cmark & \xmark\\
\bottomrule
\end{tabular}
\end{table}
\begin{table}[h]

\caption{Conditional d-separations in combinations of DAGs with variables $(T, Y, U)$ (this excludes any observed confounder $X$). For the d-separation, (\cmark) indicates that it holds and (\xmark) otherwise. The shaded rows are the cases where $U$ is a confounder to $T$ and $Y$ in $\cG$.}
\label{tab:proof_2}
\centering
\begin{tabular}[t]{rlllllll}
\toprule
id & \rot{$T - Y$} & \rot{$U - T$} & \rot{$U - Y$} & \rot{$T_j^{(k)} \perp_d Y_i^{(k)} \mid T_i^{(k)}$} & \rot{$T_j^{(k)} \perp_d Y_i^{(k)} \mid Y_j^{(k)}$} & \rot{$U$ is confounder} & \rot{$Y$ ancestor of $T$}\\
\midrule
\cellcolor{lightgray}{1} & \cellcolor{lightgray}{$\rightarrow$} & \cellcolor{lightgray}{$\rightarrow$} & \cellcolor{lightgray}{$\rightarrow$} & \cellcolor{lightgray}{\xmark} & \cellcolor{lightgray}{\xmark} & \cellcolor{lightgray}{\cmark} & \cellcolor{lightgray}{\xmark}\\
2 & $\rightarrow$ & $\rightarrow$ &  & \cmark & \xmark & \xmark & \xmark\\
3 & $\rightarrow$ & $\leftarrow$ & $\rightarrow$ & \cmark & \xmark & \xmark & \xmark\\
4 & $\rightarrow$ & $\leftarrow$ & $\leftarrow$ & \cmark & \xmark & \xmark & \xmark\\
5 & $\rightarrow$ & $\leftarrow$ &  & \cmark & \xmark & \xmark & \xmark\\
\addlinespace
6 & $\rightarrow$ &  & $\rightarrow$ & \cmark & \xmark & \xmark & \xmark\\
7 & $\rightarrow$ &  & $\leftarrow$ & \cmark & \xmark & \xmark & \xmark\\
8 & $\rightarrow$ &  &  & \cmark & \xmark & \xmark & \xmark\\
\cellcolor{lightgray}{9} & \cellcolor{lightgray}{$\leftarrow$} & \cellcolor{lightgray}{$\rightarrow$} & \cellcolor{lightgray}{$\rightarrow$} & \cellcolor{lightgray}{\xmark} & \cellcolor{lightgray}{\xmark} & \cellcolor{lightgray}{\cmark} & \cellcolor{lightgray}{\cmark}\\
10 & $\leftarrow$ & $\rightarrow$ & $\leftarrow$ & \xmark & \cmark & \xmark & \cmark\\
\addlinespace
11 & $\leftarrow$ & $\rightarrow$ &  & \xmark & \cmark & \xmark & \cmark\\
12 & $\leftarrow$ & $\leftarrow$ & $\leftarrow$ & \xmark & \cmark & \xmark & \cmark\\
13 & $\leftarrow$ & $\leftarrow$ &  & \xmark & \cmark & \xmark & \cmark\\
14 & $\leftarrow$ &  & $\rightarrow$ & \xmark & \cmark & \xmark & \cmark\\
15 & $\leftarrow$ &  & $\leftarrow$ & \xmark & \cmark & \xmark & \cmark\\
\addlinespace
16 & $\leftarrow$ &  &  & \xmark & \cmark & \xmark & \cmark\\
\cellcolor{lightgray}{17} & \cellcolor{lightgray}{} & \cellcolor{lightgray}{$\rightarrow$} & \cellcolor{lightgray}{$\rightarrow$} & \cellcolor{lightgray}{\xmark} & \cellcolor{lightgray}{\xmark} & \cellcolor{lightgray}{\cmark} & \cellcolor{lightgray}{\xmark}\\
18 &  & $\rightarrow$ & $\leftarrow$ & \xmark & \cmark & \xmark & \cmark\\
19 &  & $\rightarrow$ &  & \cmark & \cmark & \xmark & \xmark\\
20 &  & $\leftarrow$ & $\rightarrow$ & \cmark & \xmark & \xmark & \xmark\\
\addlinespace
21 &  & $\leftarrow$ & $\leftarrow$ & \cmark & \cmark & \xmark & \xmark\\
22 &  & $\leftarrow$ &  & \cmark & \cmark & \xmark & \xmark\\
23 &  &  & $\rightarrow$ & \cmark & \cmark & \xmark & \xmark\\
24 &  &  & $\leftarrow$ & \cmark & \cmark & \xmark & \xmark\\
25 &  &  &  & \cmark & \cmark & \xmark & \xmark\\
\bottomrule
\end{tabular}
\end{table}
\begin{table}[h]

\caption{Conditional d-separations in combinations of DAGs with variables $(T, Y, X, U)$. We test $T_j^{(k)} \dsep Y_i^{(k)} \mid T_i^{(k)} ,X_i^{(k)} ,X_j^{(k)}, Z$ where $Z$ contains the mechanisms that are degenerate, as noted per column. For the d-separation, (\cmark) indicates that it holds and (\xmark) otherwise. The shaded rows are the cases where $U$ is a confounder to $T$ and $Y$ in $\cG$.}
\label{tab:proof_1_degen}
\centering
\begin{tabular}[t]{rllllllllllllllllllllll}
\toprule
id & \rot{$X - T$} & \rot{$X - Y$} & \rot{$T - Y$} & \rot{$U - T$} & \rot{$U - Y$} & \rot{$U - X$} & \rot{$U$ is confounder} & \rot{Deg. $\theta_T$} & \rot{Deg. $\theta_Y$} & \rot{Deg. $\theta_X$} & \rot{Deg. $\theta_U$} & \rot{Deg. $\theta_T,\theta_Y$} & \rot{Deg. $\theta_T,\theta_X$} & \rot{Deg. $\theta_T,\theta_U$} & \rot{Deg. $\theta_Y,\theta_X$} & \rot{Deg. $\theta_Y,\theta_U$} & \rot{Deg. $\theta_X,\theta_U$} & \rot{Deg. $\theta_T,\theta_Y,\theta_X$} & \rot{Deg. $\theta_T,\theta_Y,\theta_U$} & \rot{Deg. $\theta_T,\theta_X,\theta_U$} & \rot{Deg. $\theta_Y,\theta_X,\theta_U$} & \rot{Deg. $\theta_T,\theta_Y,\theta_X,\theta_U$}\\
\midrule
\cellcolor{lightgray}{1} & \cellcolor{lightgray}{$\rightarrow$} & \cellcolor{lightgray}{$\rightarrow$} & \cellcolor{lightgray}{$\rightarrow$} & \cellcolor{lightgray}{$\rightarrow$} & \cellcolor{lightgray}{$\rightarrow$} & \cellcolor{lightgray}{$\rightarrow$} & \cellcolor{lightgray}{\cmark} & \cellcolor{lightgray}{\xmark} & \cellcolor{lightgray}{\xmark} & \cellcolor{lightgray}{\xmark} & \cellcolor{lightgray}{\xmark} & \cellcolor{lightgray}{\xmark} & \cellcolor{lightgray}{\xmark} & \cellcolor{lightgray}{\xmark} & \cellcolor{lightgray}{\xmark} & \cellcolor{lightgray}{\xmark} & \cellcolor{lightgray}{\xmark} & \cellcolor{lightgray}{\xmark} & \cellcolor{lightgray}{\xmark} & \cellcolor{lightgray}{\cmark} & \cellcolor{lightgray}{\xmark} & \cellcolor{lightgray}{\cmark}\\
\cellcolor{lightgray}{2} & \cellcolor{lightgray}{$\rightarrow$} & \cellcolor{lightgray}{$\rightarrow$} & \cellcolor{lightgray}{$\rightarrow$} & \cellcolor{lightgray}{$\rightarrow$} & \cellcolor{lightgray}{$\rightarrow$} & \cellcolor{lightgray}{$\leftarrow$} & \cellcolor{lightgray}{\cmark} & \cellcolor{lightgray}{\xmark} & \cellcolor{lightgray}{\xmark} & \cellcolor{lightgray}{\xmark} & \cellcolor{lightgray}{\xmark} & \cellcolor{lightgray}{\xmark} & \cellcolor{lightgray}{\xmark} & \cellcolor{lightgray}{\cmark} & \cellcolor{lightgray}{\xmark} & \cellcolor{lightgray}{\xmark} & \cellcolor{lightgray}{\xmark} & \cellcolor{lightgray}{\xmark} & \cellcolor{lightgray}{\cmark} & \cellcolor{lightgray}{\cmark} & \cellcolor{lightgray}{\xmark} & \cellcolor{lightgray}{\cmark}\\
\cellcolor{lightgray}{3} & \cellcolor{lightgray}{$\rightarrow$} & \cellcolor{lightgray}{$\rightarrow$} & \cellcolor{lightgray}{$\rightarrow$} & \cellcolor{lightgray}{$\rightarrow$} & \cellcolor{lightgray}{$\rightarrow$} & \cellcolor{lightgray}{} & \cellcolor{lightgray}{\cmark} & \cellcolor{lightgray}{\xmark} & \cellcolor{lightgray}{\xmark} & \cellcolor{lightgray}{\xmark} & \cellcolor{lightgray}{\xmark} & \cellcolor{lightgray}{\xmark} & \cellcolor{lightgray}{\xmark} & \cellcolor{lightgray}{\cmark} & \cellcolor{lightgray}{\xmark} & \cellcolor{lightgray}{\xmark} & \cellcolor{lightgray}{\xmark} & \cellcolor{lightgray}{\xmark} & \cellcolor{lightgray}{\cmark} & \cellcolor{lightgray}{\cmark} & \cellcolor{lightgray}{\xmark} & \cellcolor{lightgray}{\cmark}\\
4 & $\rightarrow$ & $\rightarrow$ & $\rightarrow$ & $\rightarrow$ &  & $\rightarrow$ & \xmark & \cmark & \cmark & \cmark & \cmark & \cmark & \cmark & \cmark & \cmark & \cmark & \cmark & \cmark & \cmark & \cmark & \cmark & \cmark\\
5 & $\rightarrow$ & $\rightarrow$ & $\rightarrow$ & $\rightarrow$ &  & $\leftarrow$ & \xmark & \cmark & \cmark & \cmark & \cmark & \cmark & \cmark & \cmark & \cmark & \cmark & \cmark & \cmark & \cmark & \cmark & \cmark & \cmark\\
\addlinespace
6 & $\rightarrow$ & $\rightarrow$ & $\rightarrow$ & $\rightarrow$ &  &  & \xmark & \cmark & \cmark & \cmark & \cmark & \cmark & \cmark & \cmark & \cmark & \cmark & \cmark & \cmark & \cmark & \cmark & \cmark & \cmark\\
7 & $\rightarrow$ & $\rightarrow$ & $\rightarrow$ & $\leftarrow$ & $\rightarrow$ & $\leftarrow$ & \xmark & \cmark & \cmark & \cmark & \cmark & \cmark & \cmark & \cmark & \cmark & \cmark & \cmark & \cmark & \cmark & \cmark & \cmark & \cmark\\
8 & $\rightarrow$ & $\rightarrow$ & $\rightarrow$ & $\leftarrow$ & $\rightarrow$ &  & \xmark & \cmark & \cmark & \cmark & \cmark & \cmark & \cmark & \cmark & \cmark & \cmark & \cmark & \cmark & \cmark & \cmark & \cmark & \cmark\\
9 & $\rightarrow$ & $\rightarrow$ & $\rightarrow$ & $\leftarrow$ & $\leftarrow$ & $\leftarrow$ & \xmark & \cmark & \cmark & \cmark & \cmark & \cmark & \cmark & \cmark & \cmark & \cmark & \cmark & \cmark & \cmark & \cmark & \cmark & \cmark\\
10 & $\rightarrow$ & $\rightarrow$ & $\rightarrow$ & $\leftarrow$ & $\leftarrow$ &  & \xmark & \cmark & \cmark & \cmark & \cmark & \cmark & \cmark & \cmark & \cmark & \cmark & \cmark & \cmark & \cmark & \cmark & \cmark & \cmark\\
\addlinespace
11 & $\rightarrow$ & $\rightarrow$ & $\rightarrow$ & $\leftarrow$ &  & $\leftarrow$ & \xmark & \cmark & \cmark & \cmark & \cmark & \cmark & \cmark & \cmark & \cmark & \cmark & \cmark & \cmark & \cmark & \cmark & \cmark & \cmark\\
12 & $\rightarrow$ & $\rightarrow$ & $\rightarrow$ & $\leftarrow$ &  &  & \xmark & \cmark & \cmark & \cmark & \cmark & \cmark & \cmark & \cmark & \cmark & \cmark & \cmark & \cmark & \cmark & \cmark & \cmark & \cmark\\
13 & $\rightarrow$ & $\rightarrow$ & $\rightarrow$ &  & $\rightarrow$ & $\rightarrow$ & \xmark & \cmark & \cmark & \cmark & \cmark & \cmark & \cmark & \cmark & \cmark & \cmark & \cmark & \cmark & \cmark & \cmark & \cmark & \cmark\\
14 & $\rightarrow$ & $\rightarrow$ & $\rightarrow$ &  & $\rightarrow$ & $\leftarrow$ & \xmark & \cmark & \cmark & \cmark & \cmark & \cmark & \cmark & \cmark & \cmark & \cmark & \cmark & \cmark & \cmark & \cmark & \cmark & \cmark\\
15 & $\rightarrow$ & $\rightarrow$ & $\rightarrow$ &  & $\rightarrow$ &  & \xmark & \cmark & \cmark & \cmark & \cmark & \cmark & \cmark & \cmark & \cmark & \cmark & \cmark & \cmark & \cmark & \cmark & \cmark & \cmark\\
\addlinespace
16 & $\rightarrow$ & $\rightarrow$ & $\rightarrow$ &  & $\leftarrow$ & $\leftarrow$ & \xmark & \cmark & \cmark & \cmark & \cmark & \cmark & \cmark & \cmark & \cmark & \cmark & \cmark & \cmark & \cmark & \cmark & \cmark & \cmark\\
17 & $\rightarrow$ & $\rightarrow$ & $\rightarrow$ &  & $\leftarrow$ &  & \xmark & \cmark & \cmark & \cmark & \cmark & \cmark & \cmark & \cmark & \cmark & \cmark & \cmark & \cmark & \cmark & \cmark & \cmark & \cmark\\
18 & $\rightarrow$ & $\rightarrow$ & $\rightarrow$ &  &  & $\rightarrow$ & \xmark & \cmark & \cmark & \cmark & \cmark & \cmark & \cmark & \cmark & \cmark & \cmark & \cmark & \cmark & \cmark & \cmark & \cmark & \cmark\\
19 & $\rightarrow$ & $\rightarrow$ & $\rightarrow$ &  &  & $\leftarrow$ & \xmark & \cmark & \cmark & \cmark & \cmark & \cmark & \cmark & \cmark & \cmark & \cmark & \cmark & \cmark & \cmark & \cmark & \cmark & \cmark\\
20 & $\rightarrow$ & $\rightarrow$ & $\rightarrow$ &  &  &  & \xmark & \cmark & \cmark & \cmark & \cmark & \cmark & \cmark & \cmark & \cmark & \cmark & \cmark & \cmark & \cmark & \cmark & \cmark & \cmark\\
\addlinespace
\cellcolor{lightgray}{21} & \cellcolor{lightgray}{$\rightarrow$} & \cellcolor{lightgray}{$\rightarrow$} & \cellcolor{lightgray}{} & \cellcolor{lightgray}{$\rightarrow$} & \cellcolor{lightgray}{$\rightarrow$} & \cellcolor{lightgray}{$\rightarrow$} & \cellcolor{lightgray}{\cmark} & \cellcolor{lightgray}{\xmark} & \cellcolor{lightgray}{\xmark} & \cellcolor{lightgray}{\xmark} & \cellcolor{lightgray}{\xmark} & \cellcolor{lightgray}{\xmark} & \cellcolor{lightgray}{\xmark} & \cellcolor{lightgray}{\xmark} & \cellcolor{lightgray}{\xmark} & \cellcolor{lightgray}{\xmark} & \cellcolor{lightgray}{\xmark} & \cellcolor{lightgray}{\xmark} & \cellcolor{lightgray}{\xmark} & \cellcolor{lightgray}{\cmark} & \cellcolor{lightgray}{\xmark} & \cellcolor{lightgray}{\cmark}\\
\cellcolor{lightgray}{22} & \cellcolor{lightgray}{$\rightarrow$} & \cellcolor{lightgray}{$\rightarrow$} & \cellcolor{lightgray}{} & \cellcolor{lightgray}{$\rightarrow$} & \cellcolor{lightgray}{$\rightarrow$} & \cellcolor{lightgray}{$\leftarrow$} & \cellcolor{lightgray}{\cmark} & \cellcolor{lightgray}{\xmark} & \cellcolor{lightgray}{\xmark} & \cellcolor{lightgray}{\xmark} & \cellcolor{lightgray}{\xmark} & \cellcolor{lightgray}{\xmark} & \cellcolor{lightgray}{\xmark} & \cellcolor{lightgray}{\cmark} & \cellcolor{lightgray}{\xmark} & \cellcolor{lightgray}{\xmark} & \cellcolor{lightgray}{\xmark} & \cellcolor{lightgray}{\xmark} & \cellcolor{lightgray}{\cmark} & \cellcolor{lightgray}{\cmark} & \cellcolor{lightgray}{\xmark} & \cellcolor{lightgray}{\cmark}\\
\cellcolor{lightgray}{23} & \cellcolor{lightgray}{$\rightarrow$} & \cellcolor{lightgray}{$\rightarrow$} & \cellcolor{lightgray}{} & \cellcolor{lightgray}{$\rightarrow$} & \cellcolor{lightgray}{$\rightarrow$} & \cellcolor{lightgray}{} & \cellcolor{lightgray}{\cmark} & \cellcolor{lightgray}{\xmark} & \cellcolor{lightgray}{\xmark} & \cellcolor{lightgray}{\xmark} & \cellcolor{lightgray}{\xmark} & \cellcolor{lightgray}{\xmark} & \cellcolor{lightgray}{\xmark} & \cellcolor{lightgray}{\cmark} & \cellcolor{lightgray}{\xmark} & \cellcolor{lightgray}{\xmark} & \cellcolor{lightgray}{\xmark} & \cellcolor{lightgray}{\xmark} & \cellcolor{lightgray}{\cmark} & \cellcolor{lightgray}{\cmark} & \cellcolor{lightgray}{\xmark} & \cellcolor{lightgray}{\cmark}\\
24 & $\rightarrow$ & $\rightarrow$ &  & $\rightarrow$ &  & $\rightarrow$ & \xmark & \cmark & \cmark & \cmark & \cmark & \cmark & \cmark & \cmark & \cmark & \cmark & \cmark & \cmark & \cmark & \cmark & \cmark & \cmark\\
25 & $\rightarrow$ & $\rightarrow$ &  & $\rightarrow$ &  & $\leftarrow$ & \xmark & \cmark & \cmark & \cmark & \cmark & \cmark & \cmark & \cmark & \cmark & \cmark & \cmark & \cmark & \cmark & \cmark & \cmark & \cmark\\
\addlinespace
26 & $\rightarrow$ & $\rightarrow$ &  & $\rightarrow$ &  &  & \xmark & \cmark & \cmark & \cmark & \cmark & \cmark & \cmark & \cmark & \cmark & \cmark & \cmark & \cmark & \cmark & \cmark & \cmark & \cmark\\
27 & $\rightarrow$ & $\rightarrow$ &  & $\leftarrow$ & $\rightarrow$ & $\leftarrow$ & \xmark & \cmark & \cmark & \cmark & \cmark & \cmark & \cmark & \cmark & \cmark & \cmark & \cmark & \cmark & \cmark & \cmark & \cmark & \cmark\\
28 & $\rightarrow$ & $\rightarrow$ &  & $\leftarrow$ & $\rightarrow$ &  & \xmark & \cmark & \cmark & \cmark & \cmark & \cmark & \cmark & \cmark & \cmark & \cmark & \cmark & \cmark & \cmark & \cmark & \cmark & \cmark\\
29 & $\rightarrow$ & $\rightarrow$ &  & $\leftarrow$ & $\leftarrow$ & $\leftarrow$ & \xmark & \cmark & \cmark & \cmark & \cmark & \cmark & \cmark & \cmark & \cmark & \cmark & \cmark & \cmark & \cmark & \cmark & \cmark & \cmark\\
30 & $\rightarrow$ & $\rightarrow$ &  & $\leftarrow$ & $\leftarrow$ &  & \xmark & \cmark & \cmark & \cmark & \cmark & \cmark & \cmark & \cmark & \cmark & \cmark & \cmark & \cmark & \cmark & \cmark & \cmark & \cmark\\
\addlinespace
31 & $\rightarrow$ & $\rightarrow$ &  & $\leftarrow$ &  & $\leftarrow$ & \xmark & \cmark & \cmark & \cmark & \cmark & \cmark & \cmark & \cmark & \cmark & \cmark & \cmark & \cmark & \cmark & \cmark & \cmark & \cmark\\
32 & $\rightarrow$ & $\rightarrow$ &  & $\leftarrow$ &  &  & \xmark & \cmark & \cmark & \cmark & \cmark & \cmark & \cmark & \cmark & \cmark & \cmark & \cmark & \cmark & \cmark & \cmark & \cmark & \cmark\\
33 & $\rightarrow$ & $\rightarrow$ &  &  & $\rightarrow$ & $\rightarrow$ & \xmark & \cmark & \cmark & \cmark & \cmark & \cmark & \cmark & \cmark & \cmark & \cmark & \cmark & \cmark & \cmark & \cmark & \cmark & \cmark\\
34 & $\rightarrow$ & $\rightarrow$ &  &  & $\rightarrow$ & $\leftarrow$ & \xmark & \cmark & \cmark & \cmark & \cmark & \cmark & \cmark & \cmark & \cmark & \cmark & \cmark & \cmark & \cmark & \cmark & \cmark & \cmark\\
35 & $\rightarrow$ & $\rightarrow$ &  &  & $\rightarrow$ &  & \xmark & \cmark & \cmark & \cmark & \cmark & \cmark & \cmark & \cmark & \cmark & \cmark & \cmark & \cmark & \cmark & \cmark & \cmark & \cmark\\
\addlinespace
36 & $\rightarrow$ & $\rightarrow$ &  &  & $\leftarrow$ & $\leftarrow$ & \xmark & \cmark & \cmark & \cmark & \cmark & \cmark & \cmark & \cmark & \cmark & \cmark & \cmark & \cmark & \cmark & \cmark & \cmark & \cmark\\
37 & $\rightarrow$ & $\rightarrow$ &  &  & $\leftarrow$ &  & \xmark & \cmark & \cmark & \cmark & \cmark & \cmark & \cmark & \cmark & \cmark & \cmark & \cmark & \cmark & \cmark & \cmark & \cmark & \cmark\\
38 & $\rightarrow$ & $\rightarrow$ &  &  &  & $\rightarrow$ & \xmark & \cmark & \cmark & \cmark & \cmark & \cmark & \cmark & \cmark & \cmark & \cmark & \cmark & \cmark & \cmark & \cmark & \cmark & \cmark\\
39 & $\rightarrow$ & $\rightarrow$ &  &  &  & $\leftarrow$ & \xmark & \cmark & \cmark & \cmark & \cmark & \cmark & \cmark & \cmark & \cmark & \cmark & \cmark & \cmark & \cmark & \cmark & \cmark & \cmark\\
40 & $\rightarrow$ & $\rightarrow$ &  &  &  &  & \xmark & \cmark & \cmark & \cmark & \cmark & \cmark & \cmark & \cmark & \cmark & \cmark & \cmark & \cmark & \cmark & \cmark & \cmark & \cmark\\
\bottomrule
\end{tabular}
\end{table}

\end{document}